\newcommand{\bm}[1]{\mbox{\boldmath{$#1$}}}
\theoremstyle{plain}
\newtheorem{theo}{Theorem}
\newtheorem{rem}{Remark}
\newtheorem{lem}{Lemma}
\begin{document}
\begin{sloppypar}	
\title{IRS-Aided Overloaded Multi-Antenna Systems: Joint User Grouping and Resource Allocation} 
\author{Ying~Gao, Qingqing~Wu, Wen~Chen, Yang~Liu,  Ming~Li, and Daniel~Benevides~da Costa \vspace{-1.5cm}
\thanks{Y. Gao is with the Department of Electronic Engineering, Shanghai Jiao Tong University, Shanghai 201210, China, and also with the State Key Laboratory of Internet of Things for Smart City, University of Macau, Macao 999078, China (e-mail: yinggao@um.edu.mo). Q.~Wu and W.~Chen are with the Department of Electronic Engineering, Shanghai Jiao Tong University, Shanghai 201210, China (e-mail: qingqingwu@sjtu.edu.cn; whenchen@sjtu.edu.cn).  Y.~Liu and M.~Li are with the School of Information and Communication Engineering, Dalian University of Technology, Dalian 116024, China (e-mail: yangliu\_613@dlut.edu.cn; mli@dlut.edu.cn). D.~B.~da~Costa is with the Technology Innovation Institute, 9639 Masdar City, Abu Dhabi, United Arab Emirates (email: danielbcosta@ieee.org).}}

\maketitle

\begin{abstract}
	\vspace{-2.5mm}
	This paper studies an intelligent reflecting surface (IRS)-aided multi-antenna simultaneous wireless information and power transfer (SWIPT) system where an $M$-antenna access point (AP) serves $K$ single-antenna information users (IUs) and $J$ single-antenna energy users (EUs) with the aid of an IRS with phase errors. We explicitly concentrate on overloaded scenarios where $K + J > M$ and $K \geq M$. 
	Our goal is to maximize the minimum throughput among all the IUs by optimizing the allocation of resources (including time, transmit beamforming at the AP, and reflect beamforming at the IRS), while guaranteeing the minimum amount of harvested energy at each EU. Towards this goal, we propose two user grouping (UG) schemes, namely, the non-overlapping UG scheme and the overlapping UG scheme, where the difference lies in whether identical IUs can exist in multiple groups. Different IU groups are served in orthogonal time dimensions, while the IUs in the same group are served simultaneously with all the EUs via spatial multiplexing. The two problems corresponding to the two UG schemes are mixed-integer non-convex optimization problems and difficult to solve optimally. We first provide a method to check the feasibility of these two problems, and then propose efficient algorithms for them based on the big-M formulation, the penalty method, the block coordinate descent, and the successive convex approximation. Simulation results show that: 1) the non-robust counterparts of the proposed robust designs are unsuitable for practical IRS-aided SWIPT systems with phase errors since the energy harvesting constraints cannot be satisfied; 2) the proposed UG strategies can significantly improve the max-min throughput over the benchmark schemes without UG or adopting random UG; 3) the overlapping UG scheme performs much better than its non-overlapping counterpart when the absolute difference between $K$ and $M$ is small and the EH constraints are not stringent. \vspace{-2mm}  
\end{abstract}


\begin{IEEEkeywords}
	\vspace{-3mm}
    Intelligent reflecting surface, overloaded multi-antenna systems, SWIPT, user grouping, phase errors.
\end{IEEEkeywords}

\section{Introduction}
\vspace{-1mm}
Radio-frequency (RF) signals-enabled wireless power transfer (WPT) has been recognized as a viable and convenient solution for providing virtually perpetual energy supplies to wireless devices \cite{2015_Suzhi_WPT_intro}. Moreover, since RF signals carry both energy and information, the integration of WPT and wireless information transmission (WIT) spurs a new paradigm, namely, \emph{simultaneous wireless information and power transfer} (SWIPT), which has drawn an upsurge of interest \cite{2008_Varshney_SWIPT,2013_Rui_MIMO_SWIPT}. However, as the path loss is proportional to the
transmission distance, the performance of SWIPT systems is basically limited by the low efficiency and short range of WPT. Although using massive antenna arrays at the transmitter can overcome this issue, the required high energy consumption and hardware cost hinder its practical implementation, which calls for an energy-efficient and cost-effective alternative solution \cite{2017_Qingqing_overview_5G}. 

Recently, intelligent reflecting surface (IRS) has been proposed as a promising solution that can improve the spectral efficiency and/or energy efficiency of various wireless systems \cite{2020_Qingqing_IRS_Intro}. Specifically, an IRS is a planar array consisting of a substantial quantity of low-cost passive metamaterial elements, each of which can be adapted to tune the phase shifts of the incoming signals, enabling the reconfiguration of the wireless propagation environment for boosting the efficiencies of WPT and WIT \cite{2019_Qingqing_Joint,2022_Qingqing_WEIT_overview,2022_Qingqing_WPCN}. Furthermore, IRSs possess several other attractive benefits, including a compact form factor, lightweight construction, and conformal geometry. Therefore, IRSs can be mounted on surfaces of arbitrary shapes, accommodating diverse application scenarios \cite{2019_Marco_intro}. Inspired by these advantages, several works have investigated the integration of IRSs into SWIPT systems, e.g., \cite{2020_Qingqing_SWIPT_letter,2020_Qingqing_SWIPT_QoS,2020_Cunhua_SWIPT,2020_Wei_SWIPT_secure,2021_Shayan_SWIPT,2022_Yang_SWIPT,2023_Ying_IFC}. Two distinct research lines can be identified depending on whether the information users (IUs) and energy users (EUs) are geographically separated or co-located. For the case of separated IUs and EUs, the authors of \cite{2020_Qingqing_SWIPT_letter} studied the joint design of the transmit precoder at the access point (AP) and the phase shifts at the IRS for maximizing the weighted sum-power of the EUs in an IRS-aided multiple-input single-output (MISO) SWIPT system. Their simulation results demonstrated that the IRS can significantly improve the power harvested by the EUs in its vicinity and enlarge the signal-to-interference-plus-noise ratio (SINR)-energy region. Moreover, adopting the same system model as in \cite{2020_Qingqing_SWIPT_letter}, the authors of \cite{2020_Qingqing_SWIPT_QoS} investigated the transmit power minimization problem. Also, the weighted sum-rate of the IUs was maximized in \cite{2020_Cunhua_SWIPT} for an IRS-assisted multiple-input multiple-output (MIMO) SWIPT system. 
On the other hand, for the case of co-located users with both information decoding and energy harvesting (EH) requirements, the authors of \cite{2021_Shayan_SWIPT} considered the power splitting (PS) receiver structure and maximized the minimum energy efficiency among the users to guarantee user fairness in a MISO SWIPT system aided by an IRS. Additionally, the rate-energy (R-E) trade-off of a single user employing either PS or time switching (TS) receiver structures in an IRS-aided multicarrier MISO SWIPT system was studied in \cite{2022_Yang_SWIPT}. 

All the aforementioned works assumed that the phase shifts induced by the IRS reflecting elements can be estimated perfectly and/or set precisely to the desired values, which, however, may be ideal due to the intrinsic hardware imperfection of IRSs. The phase shift deviations from the desired values caused by imperfect phase estimation and/or low-precision phase configuration are referred to as \emph{phase errors} \cite{2019_Badiu_PSE}. Several studies on wireless communication systems aided by IRSs with phase errors have been carried out, e.g., \cite{2021_Xing_error,2022_Zheng_PSE_WPCN,2022_Anastasios_PSE,2021_Tianxiong_PSE,2022_Zaid_PSE}. These works indicate that if ignoring the phase errors at the design stage, then the system performance would degrade since the system resources are not utilized properly. Among them, there are two commonly used distributions for modeling the phase errors, i.e., the uniform distribution and the Von Mises distribution. For the former case, the authors of \cite{2021_Xing_error} derived a closed-form expression for the average rate of an IRS-aided SISO system. In \cite{2022_Zheng_PSE_WPCN}, the sum throughput was maximized for an IRS-aided multiuser SISO wireless powered communication network (WPCN). 
For the latter case, the outage probability of an IRS-aided SISO system was analyzed in \cite{2021_Tianxiong_PSE}. Also, the authors of \cite{2022_Zaid_PSE} explored the performance of a double-IRS-assisted multiuser MISO system over spatially correlated channels. However, to the best of our knowledge, the research on IRS-aided SWIPT systems in the presence of phase errors is still in its infancy. If the design parameters are determined without considering the phase errors, the systems employing them may fail to meet the quality-of-service (QoS) requirements at the IUs and the EUs, and also cannot utilize the resources properly to maximize the system performance. Hence, it is necessary and important to take the phase errors into account in practical IRS-aided SWIPT systems. 

In addition to the above restriction, prior works on IRS-aided SWIPT systems (e.g., \cite{2020_Qingqing_SWIPT_letter,2020_Qingqing_SWIPT_QoS,2020_Cunhua_SWIPT,2020_Wei_SWIPT_secure,2021_Shayan_SWIPT}) have the following limitation. To be specific, in \cite{2020_Qingqing_SWIPT_letter,2020_Qingqing_SWIPT_QoS,2020_Cunhua_SWIPT,2020_Wei_SWIPT_secure,2021_Shayan_SWIPT}, the transmitter sends information and energy simultaneously via spatial multiplexing to all the IUs and EUs over the whole transmission interval. While this transmission strategy can neutralize multiuser interference and guarantee user fairness when the number of transmit antennas is sufficient, it fails to achieve satisfactory results in overloaded scenarios where the number of IUs and EUs is large such that the number of signals multiplexed in the spatial domain exceeds the number of transmit antennas, even with the aid of IRSs. Since overloaded scenarios are gaining increasing importance with the ever-growing demands for ultra-high connectivity, it is necessary to pay attention to them \cite{2017_Hamdi_overloaded}. Then, a question arises: \emph{how to improve the minimum throughput performance among the IUs in overloaded scenarios}? Intuitively, the fewer the number of IUs served by the transmitter with the help of IRSs over the given frequency band, the higher the achievable SINR of each IU. Inspired by this, user grouping (UG) can be pursued, where different IU groups are served in orthogonal time dimensions to avoid inter-group interference, and all the EUs can still harvest energy over the whole transmission duration. Although a higher SINR can be achieved per IU in this case, the duration that the transmitter serves each IU is reduced. Thus, it is unknown whether the max-min throughput performance can be improved or not by doing so. If the answer is yes, then another question arises: \emph{does allowing overlap among the IU groups lead to more significant performance improvement}? This question is motivated by the fact that as a super-scheme of non-overlapping UG, overlapping UG offers a better utilization of the system resources. For overlapping UG, the IUs that belong to multiple groups can benefit from an extended duration of service compared to when they exist in only one group. Nevertheless, as the number of IUs within a single group increases, the achievable SINR of each IU in the group decreases. Hence, it is unclear whether and when overlapping UG can noticeably outperform non-overlapping UG. The answer to this question can offer important engineering insights. For instance, considering that non-overlapping UG is easier to implement, if it exhibits comparable performance to overlapping UG, then it is undoubtedly a better choice for practical systems. 
Finally, since the spatial correlation among the IUs in the same group significantly impacts the system performance and can be changed by IRSs, both the non-overlapping and overlapping UG schemes should be carefully designed. \looseness=-1


Motivated by these considerations, this paper investigates an IRS-aided overloaded SWIPT system which is composed of an IRS with phase errors, an AP with $M$ antennas, and two sets of single-antenna users, i.e., $K$ IUs and $J$ EUs. In addition, $K + J > M$ and $K \geq M$. We aim at maximizing the minimum throughput among all the IUs via optimizing the allocation of resources (including time, transmit beamforming at the AP, and IRS phase shifts), subject to the EH requirements of the EUs. Our main contributions are summarized as follows. \looseness=-1
\begin{itemize}
	\item Unlike existing works (e.g., \cite{2020_Qingqing_SWIPT_letter,2020_Qingqing_SWIPT_QoS,2020_Cunhua_SWIPT,2020_Wei_SWIPT_secure,2021_Shayan_SWIPT}) where all the IUs are served simultaneously, we propose two UG schemes, namely, the non-overlapping UG scheme and the overlapping UG scheme, to assign the IUs into several groups. The second scheme is a super-scheme of the first one, distinguishing itself by allowing each IU to be assigned into multiple groups. 
	The transmission time is divided into several time slots, each for one group. In each time slot, the IUs in the corresponding group are served simultaneously with all the EUs via spatial multiplexing. We formulate two max-min throughput maximization problems corresponding to the two UG schemes, denoted by (P1) and (P2), respectively. These two problems are mixed-integer non-convex optimization problems, which are much more challenging to solve than those in \cite{2020_Qingqing_SWIPT_letter,2020_Qingqing_SWIPT_QoS,2020_Cunhua_SWIPT,2020_Wei_SWIPT_secure,2021_Shayan_SWIPT} that do not involve UG-related binary optimization variables.  \looseness=-1
	\item For (P1) and (P2), we first provide a method to check their feasibility. Then, we propose a computationally efficient algorithm to solve (P1) suboptimally by applying the proper change of variables, the big-M formulation, the penalty method, the block coordinate descent (BCD), and the successive convex approximation (SCA). To proceed, we prove that removing the UG-related binary variables in (P2) does not compromise optimality, which reveals that although (P2) is a general case of (P1), it is easier to solve. 
	Due to the similarity between (P1) and the simplified version of (P2) (denoted by (P2')), the algorithm proposed for (P1) is modified to find a suboptimal solution of (P2') (and thus (P2)). 
	\item Numerical results verify the effectiveness of our proposed algorithms and indicate the importance of robust design for practical IRS-aided SWIPT systems with phase errors since a non-robust design ignoring the phase errors generally leads to an infeasible EH solution. Furthermore, our proposed UG strategies can achieve remarkable improvements in max-min throughput compared to the cases without UG or adopting random UG. 
	In addition, the overlapping UG scheme is  preferable for scenarios where the absolute difference between $K$ and $M$ is small and the EH constraints are loose, since it significantly surpasses the non-overlapping UG scheme in these scenarios. By contrast, the non-overlapping UG scheme is a more favorable choice for the opposite scenarios, because it performs comparably to the overlapping UG scheme in these scenarios and is easier to implement in practice.   
\end{itemize} 

The remainder of this paper is organized as follows. Section \ref{Sec:model_formu} elaborates on the system model and problem formulations for an IRS-aided overloaded SWIPT system under two different UG strategies. Section \ref{Sec:feasi_check} provides a feasibility checking method for the formulated problems. In Section \ref{Sec:P1_solution} and \ref{Sec:P2_solution}, we propose computationally efficient algorithms to solve the formulated problems suboptimally. In Section \ref{Sec:simulation}, we evaluate the performance of our proposed algorithms via simulations. Finally, Section \ref{Sec:conclusion} concludes the paper.  

\emph{Notations:} $\mathbb C$ denotes the complex space. $\mathbb C^{M\times N}$ represents the space of $M\times N$ complex-valued matrices. Denote by $\mathbb H^M$ the set of all $M$-dimensional complex Hermitian matrices. $\mathbf 0$ and $\mathbf I$ are an all-zero matrix and an identity matrix, respectively, whose dimensions are determined by the context. For a square matrix $\mathbf S$, $\mathbf S\succeq \mathbf 0$ means that $\mathbf S$ is positive semidefinite while ${\rm tr}\left(\mathbf S\right)$ denotes its trace. For two square matrices $\mathbf S_1$ and $\mathbf S_2$, $\mathbf S_1 \succeq \mathbf S_2$ $(\mathbf S_1 \preceq \mathbf S_2)$ indicates that $\mathbf S_1 - \mathbf S_2$ is positive (negative) semidefinite. $\left\|\cdot\right\|_2$ stands for the maximum
singular value of a matrix. Let ${\rm rank}(\cdot)$ be the rank of a matrix. We denote the conjugate transpose and expectation operators by $(\cdot)^H$ and $\mathbb E\left(\cdot\right)$, respectively. $\left\|\cdot\right\|$ and $[\cdot]_i$ represent the Euclidean norm and the $i$-th element of a vector, respectively. ${\rm diag}\left(\cdot\right)$ denotes the diagonalization operation. $\mathcal{CN}\left(\mathbf x,\mathbf \Sigma\right)$ represents a complex Gaussian distribution with a mean vector $\mathbf x$ and co-variance matrix $\mathbf \Sigma$. For a scalar $x$, $\left|x\right|$ denotes its modulus. For a set $\mathcal X$, $\left|\mathcal X\right|$ denotes its cardinality. $\jmath \triangleq \sqrt{-1}$ refers to the imaginary unit. Denote ${\rm Re}\{\cdot\}$ as the real part of a complex number. $\odot$ denotes the Hadamard product. 

\begin{figure}[!t]
	\vspace{-2mm}
	\centering
	\subfigure[]{\label{fig:system_model_a}
		\includegraphics[scale=0.68]{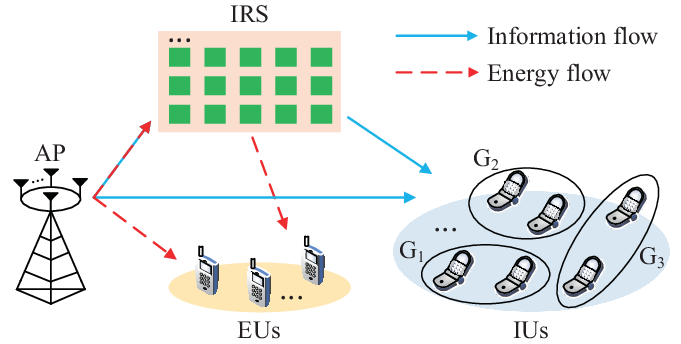}}
	\hspace{2mm}
	\subfigure[]{\label{fig:system_model_b}
		\includegraphics[scale=0.68]{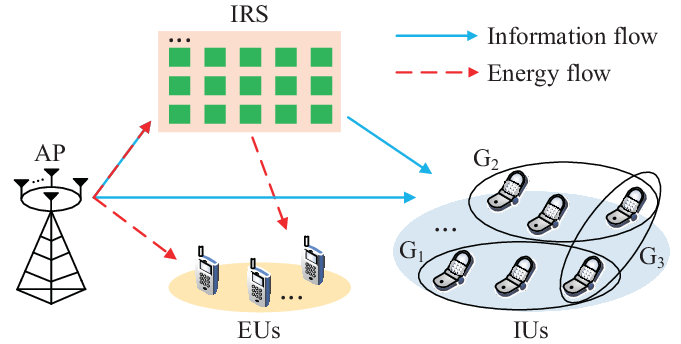}}
	\caption{Illustration of an IRS-aided SWIPT system with different UG strategies: (a) Non-overlapping UG; (b) Overlapping UG. }
	\label{fig:system_model}
	\vspace{-3mm}
\end{figure}

\section{System Model and Problem Formulation}\label{Sec:model_formu}
\subsection{System Model}

This paper considers an IRS-aided overloaded multiuser MISO downlink SWIPT system consisting of an $N$-element passive IRS, 
an $M$-antenna AP, $K$ single-antenna IUs, and $J$ single-antenna EUs, where $K + J > M$ and $K \geq M$. The sets of reflecting elements, IUs, and EUs are denoted by $\mathcal N$, $\mathcal K$, and $\mathcal J$, respectively, with $\left|\mathcal N\right| = N$, $\left|\mathcal K\right| = K$, and $\left|\mathcal J\right| = J$. It is assumed that the $K$ IUs can be assigned into at most $L$ groups, indexed by G$_1$, $\cdots$, G$_L$. Define a binary variable $a_{k,\ell}$, $k\in\mathcal K, \ell \in \mathcal L \triangleq \{1,\cdots,L\}$, which indicates that the $k$-th IU is assigned into the $\ell$-th group if $a_{k,\ell} = 1$; otherwise, $a_{k,\ell} = 0$. As illustrated in Fig. \ref{fig:system_model}, we consider two UG schemes, i.e., the non-overlapping UG scheme and the overlapping UG scheme, according to whether there are identical IUs in different groups. For the non-overlapping scheme, we have $\sum_{\ell \in\mathcal L} a_{k,\ell} \leq 1$, whereas for the overlapping scheme, there is no constraint on the value of $\sum_{\ell \in\mathcal L} a_{k,\ell}$, $ \forall k\in\mathcal K$. Furthermore, the total transmission time $T$ is divided into $L$ time slots, each occupying a duration of $\tau_\ell \geq 0$ ($\ell\in\mathcal L$), satisfying $\sum_{\ell \in\mathcal L}\tau_\ell \leq T$. In time slot $\ell$, the AP transmits energy and information simultaneously to all the EUs and only the IUs in G$_\ell$ over the given frequency band, as shown in Fig. \ref{fig:frame_structure}. By relying on linear precoding, the complex baseband transmitted signal from the AP at time slot $\ell$, $\ell\in\mathcal L$, can be expressed as $\bm x_\ell = \sum_{k\in\mathcal K}a_{k,\ell}\mathbf w_{k,\ell}s_{k} + \bm x_{\mathrm E,\ell}$, 
where $s_k \in\mathbb C$ denotes the transmitted data symbol for IU $k$, which is precoded by the precoding vector $\mathbf w_{k,\ell} \in\mathbb C^{M\times 1}$ at time slot $\ell$ if $a_{k,\ell} = 1$. Suppose that $s_k \sim \mathcal CN\left(0,1\right)$, $\forall k\in\mathcal K$ and $\{s_k\}$ are independent over $k$. In addition, $\bm x_{\mathrm E,\ell} \in\mathbb C^{M\times 1}$ denotes the transmitted energy signal at time slot $\ell$ with covariance matrix $\mathbf W_{\mathrm E,\ell} = \mathbb E\left(\bm x_{\mathrm E,\ell}\bm x^H_{\mathrm E,\ell}\right) \succeq \mathbf 0$, and the rank of $\mathbf W_{\mathrm E,\ell}$ determines the number of energy beams that are spatially transmitted \cite{2014_Jie_WPT}.    

\begin{figure}[!t]
	\centering
	\includegraphics[width=0.8\textwidth]{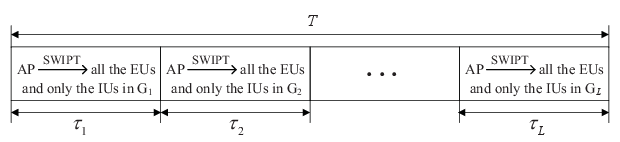}\vspace{-1.5mm}
	\caption{Illustration of the transmission protocol.}
	\label{fig:frame_structure}
	\vspace{-3mm}
\end{figure}

The quasi-static flat-fading model is assumed for all the channels. Let $\mathbf F \in\mathbb C^{N\times M}$, $\mathbf h_{d,k}^H \in\mathbb C^{1\times M}$, $\mathbf g_{d,j}^H \in\mathbb C^{1\times M}$, $\mathbf h_{r,k}^H \in\mathbb C^{1\times N}$, and $\mathbf g_{r,j}^H \in\mathbb C^{1\times N}$ denote the channel coefficients from the AP to the IRS, from the AP to IU $k$, from the AP to EU $j$, from the IRS to IU $k$, and from the IRS to EU $j$, respectively. The cascaded channels from the AP to IU $k$ and EU $j$ via the IRS can be denoted as $\mathbf \Phi_k = {\rm diag}\left(\mathbf h_{r,k}^H\right)\mathbf F$ and $\mathbf \Psi_j = {\rm diag}\left(\mathbf g_{r,j}^H\right)\mathbf F$, respectively. We assume that the perfect channel state information of the direct and cascaded channels can be acquired using existing channel estimation methods such as \cite{2020_G.T._Channel_PARAFAC-Based, 2021_Wei_Channel_RIS-Empowered}. Besides, denoted by $\mathbf \Theta_\ell = {\rm diag}\left( e^{\jmath \theta_{\ell,1}}, \cdots, e^{\jmath \theta_{\ell,N}}\right)$ and $\tilde{\mathbf \Theta}_\ell = {\rm diag}\left( e^{\jmath \tilde{\theta}_{\ell,1}}, \cdots, e^{\jmath \tilde{\theta}_{\ell,N}}\right)$ the phase-shift matrix and the phase-error matrix at the IRS at time slot $\ell$, respectively, where $\theta_{\ell,n} \in [0,2\pi)$ stands for the phase shift induced by the $n$-th element and $\tilde{\theta}_{\ell,n}$ represents the additive random phase error that reflects the imperfection in phase estimation and/or phase configuration. Moreover, $\tilde{\theta}_{\ell,n}$ is assumed to be uniformly distributed on $\left[-{\pi}/{2}, {\pi}/{2}\right]$, $\forall \ell \in\mathcal L, n\in\mathcal N$ \cite{2021_Xing_error}. Then, the received signal at IU $k$ at time slot $\ell$ is given by 
\begin{align}
y_{k,\ell}^{\rm I} & = \left(\bm h_{r,k}^H\mathbf \Theta_\ell\tilde{\mathbf \Theta}_{\ell}\mathbf F + \mathbf h_{d,k}^H \right) \bm x_\ell + n_k = \left(\mathbf v_\ell\odot\tilde {\mathbf v}_\ell\right)^H\mathbf H_k\bm x_\ell + n_k, \ k\in\mathcal K, \ell\in\mathcal L, 
\end{align}
where $\mathbf v_\ell = \left[\mathbf u_\ell; 1\right]$ with $\mathbf u_\ell = \left[e^{\jmath\theta_{\ell,1}},\cdots, e^{\jmath\theta_{\ell,N}}\right]^H$,  $\tilde {\mathbf v}_\ell = \left[\tilde {\mathbf u}_\ell; 1\right]$ with $\tilde {\mathbf u}_\ell = \left[e^{\jmath\tilde {\theta}_{\ell,1}},\cdots, e^{\jmath\tilde {\theta}_{\ell,N}}\right]^H$, $\mathbf H_k = \left[\mathbf \Phi_k;\mathbf h_{d,k}^H\right]$, and $n_k \sim\mathcal {CN} \left(0, \sigma_k^2 \right)$ represents the additive white Gaussian noise with variance $\sigma_k^2$ at IU $k$. Assuming that the IUs cannot cancel the interference caused by the energy signals, the SINR of IU $k$ at time slot $\ell$ can be written as
\begin{align}\label{equ:SINR}
\gamma_{k,\ell} = \frac{a_{k,\ell}\left|\left( \mathbf v_\ell\odot\tilde {\mathbf v}_\ell\right)^H\mathbf H_k\mathbf w_{k,\ell}\right|^2}{\sum_{i\in \mathcal K\backslash\{k\}}a_{i,\ell}\left|\left( \mathbf v_\ell\odot\tilde {\mathbf v}_\ell\right)^H\mathbf H_k\mathbf w_{i,\ell}\right|^2 + {\rm tr}\left(\mathbf H_k^H\left( \mathbf v_\ell\odot\tilde {\mathbf v}_\ell\right)\left( \mathbf v_\ell\odot\tilde {\mathbf v}_\ell\right)^H\mathbf H_k\mathbf W_{\mathrm E,\ell}\right) + \sigma_k^2}, 
\end{align} 
On the other hand, by adopting the widely used linear EH model \cite{2020_Qingqing_SWIPT_letter,2020_Qingqing_SWIPT_QoS,2020_Cunhua_SWIPT}, the harvested RF-band energy at EU $j$, $j\in\mathcal J$, over the whole transmission duration can be expressed as 
\begin{align}\label{equ:E}
Q_j = \sum_{\ell\in\mathcal L}\tau_\ell\left(\sum_{k\in\mathcal K}a_{k,\ell}\left|\left( \mathbf v_\ell\odot\tilde {\mathbf v}_\ell\right)^H\mathbf G_j\mathbf w_{k,\ell} \right|^2 + {\rm tr}\left(\mathbf G_j^H\left( \mathbf v_\ell\odot\tilde {\mathbf v}_\ell\right)\left( \mathbf v_\ell\odot\tilde {\mathbf v}_\ell\right)^H\mathbf G_j\mathbf W_{\mathrm E, \ell} \right) \right), 
\end{align}
where $\mathbf G_j = \left[\mathbf \Psi_j;\mathbf g_{d,j}^H\right]$ and the negligible noise power is ignored.

Note that $\gamma_{k,\ell}$ and $Q_j$ contain the random phase errors that are generally unknown. In view of this, we consider the expectations of them.  


\begin{theo}\label{theo}
	\vspace{-3mm}
	The expectations of $\gamma_{k,\ell}$ and $Q_j$ are respectively given by
	{\addtolength{\jot}{5pt}
	\begin{align}
	\mathbb E_{\tilde{\mathbf v}_\ell}\{ \gamma_{k,\ell}\} & = \frac{a_{k,\ell}\mathbf w_{k,\ell}^H\mathbf X_{k,\ell}\mathbf w_{k,\ell}}{\sum_{i\in \mathcal K\backslash\{k\}}a_{i,\ell}\mathbf w_{i,\ell}^H\mathbf X_{k,\ell}\mathbf w_{i,\ell} + {\rm tr}\left(\mathbf X_{k,\ell}\mathbf W_{\mathrm E,\ell}\right)+ \sigma_k^2} \triangleq \hat \gamma_{k,\ell}, \  k\in\mathcal K, \ell\in\mathcal L, \label{expecta_SINR}\\ 
	\mathbb E_{\tilde{\mathbf v}_\ell}\{ Q_j\} & = \sum_{\ell\in\mathcal L}\tau_\ell\left(\sum_{k\in\mathcal K}a_{k,\ell}\mathbf w_{k,\ell}^H\mathbf Y_{j,\ell}\mathbf w_{k,\ell} + {\rm tr}\left(\mathbf Y_{j,\ell}\mathbf W_{\mathrm E, \ell} \right) \right), \  j\in\mathcal J, \label{expecta_E}
	\end{align}}%
	where $\mathbf X_{k,\ell} = \mathbf H_k^H{\rm diag}\left(\mathbf v_\ell\right)\mathbf Z{\rm diag}\left(\mathbf v_{\ell}^H\right)\mathbf H_k$, $\mathbf Y_{j,\ell} = \mathbf G_j^H{\rm diag}\left(\mathbf v_\ell\right)\mathbf Z{\rm diag}\left(\mathbf v_{\ell}^H\right)\mathbf G_j$, and 
	\begin{align}
	\mathbf Z = \begin{bmatrix}
	1& \frac{4}{\pi^2} & \cdots & \frac{4}{\pi^2} & \frac{2}{\pi} \\
	\frac{4}{\pi^2} & 1 & \cdots & \frac{4}{\pi^2}  & \frac{2}{\pi} \\
	\vdots & \vdots & \ddots & \vdots & \vdots \\
	\frac{4}{\pi^2} & \frac{4}{\pi^2} & \cdots &  1 & \frac{2}{\pi} \\
	\frac{2}{\pi} & \frac{2}{\pi} & \cdots & \frac{2}{\pi} & 1 \\
	\end{bmatrix} \in\mathbb R^{\left(N+1\right) \times \left(N+1\right)}.
	\end{align} 
\end{theo}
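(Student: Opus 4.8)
The plan is to reduce both expectations to a single second-moment computation for the randomized effective array vector $\mathbf p_\ell \triangleq \mathbf v_\ell \odot \tilde{\mathbf v}_\ell$, because every power term appearing in \eqref{equ:SINR} and \eqref{equ:E} is a Hermitian quadratic form in $\mathbf p_\ell$. Writing, for instance, $\left|(\mathbf v_\ell\odot\tilde{\mathbf v}_\ell)^H\mathbf H_k\mathbf w\right|^2 = \mathbf p_\ell^H\mathbf H_k\mathbf w\mathbf w^H\mathbf H_k^H\mathbf p_\ell = {\rm tr}(\mathbf H_k\mathbf w\mathbf w^H\mathbf H_k^H\,\mathbf p_\ell\mathbf p_\ell^H)$, and treating the energy-signal traces and the EU terms in the same way, I see that it suffices to evaluate the single matrix $\mathbb E_{\tilde{\mathbf v}_\ell}\{\mathbf p_\ell\mathbf p_\ell^H\}$; linearity of expectation and the cyclic property of the trace then produce all the terms simultaneously.

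First I would compute this matrix entrywise. Factoring $\mathbf p_\ell = {\rm diag}(\mathbf v_\ell)\tilde{\mathbf v}_\ell$ pulls the deterministic phase shifts out of the expectation, giving $\mathbb E_{\tilde{\mathbf v}_\ell}\{\mathbf p_\ell\mathbf p_\ell^H\} = {\rm diag}(\mathbf v_\ell)\,\mathbb E_{\tilde{\mathbf v}_\ell}\{\tilde{\mathbf v}_\ell\tilde{\mathbf v}_\ell^H\}\,{\rm diag}(\mathbf v_\ell^H)$, so the only object to characterize is $\mathbf Z \triangleq \mathbb E_{\tilde{\mathbf v}_\ell}\{\tilde{\mathbf v}_\ell\tilde{\mathbf v}_\ell^H\}$, whose $(m,n)$ entry is $\mathbb E\{e^{-\jmath\tilde\theta_{\ell,m}}e^{\jmath\tilde\theta_{\ell,n}}\}$ (with the convention $\tilde\theta_{\ell,N+1}=0$ for the appended unit entry). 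The diagonal entries are $1$; for $m\neq n$ the expectation factorizes by the assumed independence of the $\tilde\theta_{\ell,n}$ across $n$, and the single scalar integral $\mathbb E\{e^{\jmath\tilde\theta}\}=\frac{1}{\pi}\int_{-\pi/2}^{\pi/2}e^{\jmath\theta}\,{\rm d}\theta=\frac{2}{\pi}$ (the sine part vanishing by symmetry) yields $\frac{4}{\pi^2}$ when both indices are genuine reflecting elements and $\frac{2}{\pi}$ when exactly one index is the $(N{+}1)$-th coordinate. This reproduces exactly the claimed $\mathbf Z$.

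Substituting $\mathbb E_{\tilde{\mathbf v}_\ell}\{\mathbf p_\ell\mathbf p_\ell^H\} = {\rm diag}(\mathbf v_\ell)\mathbf Z{\rm diag}(\mathbf v_\ell^H)$ into the trace form of each power term and invoking the cyclic property, ${\rm tr}(\mathbf H_k\mathbf w\mathbf w^H\mathbf H_k^H\,{\rm diag}(\mathbf v_\ell)\mathbf Z{\rm diag}(\mathbf v_\ell^H)) = \mathbf w^H\mathbf X_{k,\ell}\mathbf w$ with $\mathbf X_{k,\ell}=\mathbf H_k^H{\rm diag}(\mathbf v_\ell)\mathbf Z{\rm diag}(\mathbf v_\ell^H)\mathbf H_k$, which converts the signal power, each interference power, and the energy-leakage trace into $\mathbf w_{k,\ell}^H\mathbf X_{k,\ell}\mathbf w_{k,\ell}$, $\mathbf w_{i,\ell}^H\mathbf X_{k,\ell}\mathbf w_{i,\ell}$, and ${\rm tr}(\mathbf X_{k,\ell}\mathbf W_{\mathrm E,\ell})$, respectively; the identical manipulation with $\mathbf G_j$ replacing $\mathbf H_k$ gives $\mathbf Y_{j,\ell}$. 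For $Q_j$ the conclusion is then immediate: it is a finite nonnegative combination (weighted by the deterministic $\tau_\ell$ and $a_{k,\ell}$) of such quadratic forms, so by linearity of expectation \eqref{expecta_E} holds exactly.

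The one point requiring genuine care, and the real conceptual content, is $\gamma_{k,\ell}$: it is a ratio of two $\tilde{\mathbf v}_\ell$-dependent quantities, so $\mathbb E\{\gamma_{k,\ell}\}$ is not literally the quotient of the two expectations. I would therefore take the expectation of the received signal power in the numerator and of the aggregate interference-plus-energy-plus-noise power in the denominator separately—the standard treatment for such randomized-phase SINRs—so that \eqref{expecta_SINR} is read as the expected signal power over the expected total disturbance power, each computed by the lemma above. Once $\mathbf Z$ is in hand, this separate-expectation step is the only substantive choice; all remaining work is routine quadratic-form bookkeeping.
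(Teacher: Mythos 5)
Your proposal is correct and follows essentially the same route as the paper's proof in Appendix A: reduce every power term to a Hermitian quadratic form in $\mathbf p_\ell = \mathbf v_\ell\odot\tilde{\mathbf v}_\ell$, factor $\mathbb E_{\tilde{\mathbf v}_\ell}\left\lbrace\mathbf p_\ell\mathbf p_\ell^H\right\rbrace = {\rm diag}\left(\mathbf v_\ell\right)\mathbb E_{\tilde{\mathbf v}_\ell}\left\lbrace\tilde{\mathbf v}_\ell\tilde{\mathbf v}_\ell^H\right\rbrace{\rm diag}\left(\mathbf v_\ell^H\right)$, and compute the middle matrix $\mathbf Z$ entrywise before substituting back. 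There are two minor differences, both in your favor. First, for the off-diagonal entries you factorize $\mathbb E\left\lbrace e^{-\jmath\tilde\theta_{\ell,m}}e^{\jmath\tilde\theta_{\ell,n}}\right\rbrace = \mathbb E\left\lbrace e^{-\jmath\tilde\theta_{\ell,m}}\right\rbrace\mathbb E\left\lbrace e^{\jmath\tilde\theta_{\ell,n}}\right\rbrace = \left(2/\pi\right)^2 = 4/\pi^2$ by independence across elements, whereas the paper first derives the triangular density of $\Delta\tilde\theta_{\ell,m,n} = \tilde\theta_{\ell,m}-\tilde\theta_{\ell,n}$ on $[-\pi,\pi]$ and integrates $e^{\jmath\Delta\tilde\theta_{\ell,m,n}}$ against it; your route is shorter and makes explicit the independence assumption that the paper's triangular-density step also quietly requires. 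Second, you correctly flag that $\mathbb E\{\gamma_{k,\ell}\}$ is not literally the quotient of expectations and read \eqref{expecta_SINR} as the expected signal power over the expected interference-plus-energy-plus-noise power; the paper adopts exactly this convention without comment, writing the expectation of the ratio directly with $\mathcal P_\ell$ inserted in both numerator and denominator, so your explicit caveat is added care rather than a divergence (and for $Q_j$, as you note, linearity makes \eqref{expecta_E} exact in both treatments).
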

\begin{proof}
Please refer to Appendix \ref{Appen_A}.	
\end{proof}

\subsection{Problem Formulation}
In this paper, we aim to maximize the minimum throughput among all the IUs, denoted by $\eta \triangleq \min_{k\in\mathcal K} \sum_{\ell\in\mathcal L}\tau_\ell\log_2\left(1 + \hat\gamma_{k,\ell} \right)$, by jointly optimizing the UG variables $\{a_{k,\ell}\}$, the time allocation $\{\tau_\ell\}$, the information precoders $\{\mathbf w_{k,\ell}\}$ and the energy covariance matrices $\{\mathbf W_{\mathrm E,\ell}\}$ at the AP, and the IRS phase-shift vectors $\{\mathbf v_\ell\}$ while satisfying the EH constraints at the EUs. For the non-overlapping UG scheme, we can formulate the problem of interest as follows
\begin{subequations}
	\setlength\abovedisplayskip{5pt}
	\setlength\belowdisplayskip{5pt}
	\begin{eqnarray}
	\hspace{-6mm}\text{(P1)}: & \underset{\substack{\eta, \{\mathbf w_{k,\ell}\}, \{\mathbf W_{\mathrm E,\ell} \succeq \mathbf 0\}, \\\{ a_{k,\ell}\}, \{\tau_\ell\},\{\mathbf v_\ell\}} }{\max}&  \eta\\
    &\text{s.t.} & \hspace{-8mm} \sum_{\ell\in\mathcal L}\tau_\ell\log_2\left(1 + \hat\gamma_{k,\ell} \right) \geq \eta, \ \forall k\in\mathcal K, \label{P1_cons:b}\\ 
    && \hspace{-8mm}\sum_{\ell\in\mathcal L}\tau_\ell\left(\sum_{k\in\mathcal K}a_{k,\ell}\mathbf w_{k,\ell}^H\mathbf Y_{j,\ell}\mathbf w_{k,\ell} + {\rm tr}\left(\mathbf Y_{j,\ell}\mathbf W_{\mathrm E, \ell} \right) \right) \geq E, \ \forall j\in\mathcal J, \label{P1_cons:c}\\
	&& \hspace{-8mm} \sum_{k\in\mathcal K}a_{k,\ell}\left\|\mathbf w_{k,\ell} \right\|^2 + {\rm tr}\left(\mathbf W_{\mathrm E,\ell} \right)  \leq P, \ \forall \ell\in\mathcal L, \label{P1_cons:d}\\
	&& \hspace{-8mm} \sum_{\ell\in\mathcal L} \tau_\ell \leq T, \ \tau_\ell \geq 0, \ \forall \ell\in\mathcal L, \label{P1_cons:e}\\
	&& \hspace{-8mm} a_{k,\ell} \in\{0,1\}, \ \forall k\in\mathcal K, \ell\in\mathcal L,\label{P1_cons:f}\\
	&& \hspace{-8mm} \sum_{\ell \in\mathcal L} a_{k,\ell} \leq 1, \ \forall k\in\mathcal K, \label{P1_cons:g}\\
	&& \hspace{-8mm} \left|\left[\mathbf v_\ell\right]_n\right| = 1, \ \left[\mathbf v_\ell\right]_{N+1} = 1, \ \forall \ell\in\mathcal L, n\in\mathcal N, \label{P1_cons:h}  
	\end{eqnarray}
\end{subequations}
where constraint \eqref{P1_cons:c} indicates that each EU is required to harvest at least $E$ Joule (J) energy and constraint \eqref{P1_cons:d} implies that the AP's instantaneous transmit power cannot exceed $P$. 
Similarly, the minimum throughput maximization problem corresponding to the overlapping UG scheme can be formulated as
\begin{subequations}
	\setlength\abovedisplayskip{5pt}
	\setlength\belowdisplayskip{5pt}
	\begin{eqnarray}
	\hspace{-6mm}\text{(P2)}: &\underset{\substack{\eta, \{\mathbf w_{k,\ell}\}, \{\mathbf W_{\mathrm E,\ell} \succeq \mathbf 0\}, \\\{ a_{k,\ell}\}, \{\tau_\ell\},\{\mathbf v_\ell\}} }{\max}&  \eta\\
	&\text{s.t.}& \hspace{-8mm} \eqref{P1_cons:b} - \eqref{P1_cons:f}, \eqref{P1_cons:h}. 
	\end{eqnarray}
\end{subequations} 
Note that the only difference between (P1) and (P2) is that (P1) includes an extra constraint \eqref{P1_cons:g}. Both (P1) and (P2) are challenging to solve for the following reasons: 1) the variables $\{a_{k,\ell}\}$ are binary, making \eqref{P1_cons:b}-\eqref{P1_cons:d} involve integer constraints; 2) even with fixed $\{a_{k,\ell}\}$, \eqref{P1_cons:b}-\eqref{P1_cons:d} are non-convex constraints due to the coupling of all other variables; 3) the unit-modulus constraints on the IRS phase shifts in \eqref{P1_cons:h} are non-convex. As a result, (P1) and (P2) are both mixed-integer non-convex optimization problems, which are typically NP-hard and non-trivial to solve optimally. \looseness=-1

\section{Feasibility Checking for (P1) and (P2)}\label{Sec:feasi_check}
Prior to solving (P1) and (P2), we first check their feasibility, i.e., whether the EH requirement of each EU can be satisfied under the given AP's transmit power and transmission duration. To this end, we define $\delta \triangleq \min_{j\in\mathcal J}\sum_{\ell\in\mathcal L}\tau_\ell{\rm tr}\left(\mathbf Y_{j,\ell}\mathbf W_{\mathrm E,\ell} \right)$ and consider the following minimum harvested energy maximization problem: 
\begin{subequations}\label{prob:feasi_E}
	\setlength\abovedisplayskip{5pt}
	\setlength\belowdisplayskip{5pt}
	\begin{eqnarray}
	 &\underset{\delta, \{\mathbf W_{\mathrm E,\ell} \succeq \mathbf 0\}, \{\tau_\ell\},\{\mathbf v_\ell\}}{\max}& \delta\\
	&\text{s.t.}& \hspace{-9mm}  \sum_{\ell\in\mathcal L}\tau_\ell{\rm tr}\left(\mathbf Y_{j,\ell}\mathbf W_{\mathrm E,\ell} \right) \geq \delta, \ \forall j\in\mathcal J, \label{E_cons:b}\\
	&& \hspace{-9mm} {\rm tr}\left(\mathbf W_{\mathrm E,\ell} \right) \leq P, \ \forall \ell\in\mathcal L,\label{E_cons:c}\\
	&& \hspace{-9mm}\sum_{\ell\in\mathcal L} \tau_\ell \leq T, \ \tau_\ell \geq 0, \ \forall \ell\in\mathcal L, \label{E_cons:d}\\
	&& \hspace{-9mm} \left| \left[\mathbf v_\ell\right]_n\right| \leq 1, \ \left[\mathbf v_\ell\right]_{N+1} = 1, \ \forall \ell\in\mathcal L, n\in\mathcal N,  \label{E_cons:e}
	\end{eqnarray}
\end{subequations} 
which is non-convex because the optimization variables are strongly coupled in constraint \eqref{E_cons:b}. Given that it is difficult, if not impossible, to solve this problem directly, we alternately solve its subproblems concerning different sets of variables based on the principle of BCD \cite{2001_Tseng_BCD}, as detailed in the following. 

\subsection{Optimizing $\left\lbrace \{\mathbf W_{\mathrm E,\ell}\}, \{\tau_\ell\} \right\rbrace$ for Given $\{\mathbf v_\ell\}$} With given $\{\mathbf v_\ell\}$, by applying the change of variables $\mathbf S_{\mathrm E,\ell} = \tau_\ell\mathbf W_{\mathrm E,\ell}$, $\forall \ell\in\mathcal L$, the subproblem with respect to (w.r.t.) $\left\lbrace \{\mathbf W_{\mathrm E,\ell}\}, \{\tau_\ell\} \right\rbrace$ can be equivalently expressed as  
\begin{subequations}\label{E_sub1}
    \setlength\abovedisplayskip{5pt}
	\setlength\belowdisplayskip{5pt}
	\begin{eqnarray}
	&\underset{\delta, \{\mathbf S_{\mathrm E,\ell} \succeq \mathbf 0\}, \{\tau_\ell\}}{\max}& \delta\\
	&\text{s.t.}& \hspace{-2mm}  \sum_{\ell\in\mathcal L}{\rm tr}\left(\mathbf Y_{j,\ell}\mathbf S_{\mathrm E,\ell} \right) \geq \delta, \ \forall j\in\mathcal J, \label{E_sub1_cons:b}\\
	&& \hspace{-2mm} {\rm tr}\left(\mathbf S_{\mathrm E,\ell} \right)  \leq \tau_{\ell}P, \ \forall \ell\in\mathcal L,\label{E_sub1_cons:c}\\
	&& \hspace{-2mm} \eqref{E_cons:d}.
	\end{eqnarray}
\end{subequations} 
By direct inspection, problem \eqref{E_sub1} is a convex  semidefinite program (SDP), and its optimal solution, denoted by $\{\{\mathbf S_{\mathrm E,\ell}^\star\},\{\tau_\ell^\star\}\}$, can be found by ready-made solvers, e.g., CVX \cite{2004_S.Boyd_cvx}. Moreover, the optimal original variables $\{\mathbf W_{\mathrm E,\ell}^\star\}$ can be recovered from $\{\{\mathbf S_{\mathrm E,\ell}^\star\},\{\tau_\ell^\star\}\}$ by setting $\mathbf W_{\mathrm E,\ell}^\star = \frac{\mathbf S_{\mathrm E,\ell}^\star}{\tau_\ell^\star}$ if $\tau_\ell^\star > 0$ and $\mathbf W_{\mathrm E,\ell}^\star = \mathbf 0$ otherwise, $\forall \ell\in\mathcal L$.    
  
\subsection{Optimizing $\{\mathbf v_\ell\}$ for Given $\left\lbrace \{\mathbf W_{\mathrm E,\ell}\},\{\tau_\ell\} \right\rbrace$}
For any given $\left\lbrace \{\mathbf W_{\mathrm E,\ell}\},\{\tau_\ell\} \right\rbrace$, the subproblem of problem \eqref{prob:feasi_E} for optimizing $\{\mathbf v_\ell\}$ can be written as 
{\setlength\abovedisplayskip{5pt}
\begin{align}\label{E_sub2}
\underset{\delta,\{\mathbf v_\ell\}}{\max} \hspace{3mm} \delta \hspace{7mm}
\text{s.t.} \hspace{3mm}\eqref{E_cons:b}, \eqref{E_cons:e}.
\end{align}}%
It is hard to state whether constraint \eqref{E_cons:b} is convex since the optimization variables $\{\mathbf v_\ell\}$ are not exposed in the current form of \eqref{E_cons:b}. To tackle this issue, we introduce the following lemma.   
\begin{lem}\label{lem}
	\vspace{-3mm}
	Constraint \eqref{E_cons:b} can be equivalently converted to
	{\setlength\abovedisplayskip{5pt}
	\setlength\belowdisplayskip{5pt}
	\begin{align}\label{E_cons:b_eqv}
	\sum_{\ell\in\mathcal L'}\tau_\ell\mathbf v_{\ell}^H\mathbf Q_{j,\mathrm E, \ell}\mathbf v_{\ell} \geq \delta, \ \forall j\in\mathcal J, 
	\end{align}}%
	where $\mathcal L' = \{\ell|\tau_\ell > 0\} \subseteq \mathcal L$ and $\mathbf Q_{j,\mathrm E,\ell} = \sum_{m=1}^{r_{\rm E,\ell}}q_{\ell,m}{\rm diag}\left(\mathbf G_j\mathbf w_{\rm E,\ell,m}\right)\mathbf Z\left( {\rm diag}\left(\mathbf G_j\mathbf w_{\rm E,\ell,m}\right)\right)^H$ with $r_{\rm E,\ell} = {\rm rank}\left(\mathbf W_{\mathrm E,\ell}\right) \geq 1$, $q_{\ell,1}, \cdots, q_{\ell,r_{\rm E,\ell}}$ denoting the eigenvalues of $\mathbf W_{\rm E, \ell}$, and $\mathbf w_{\rm E,\ell,m}$ being the  unit-norm eigenvector of $\mathbf W_{\rm E, \ell}$ corresponding to $q_{\ell,m}$, $m\in\{1,\cdots,r_{\rm E,\ell}\}$.   
\end{lem}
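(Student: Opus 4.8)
The plan is to reduce constraint \eqref{E_cons:b}, for each fixed pair $(j,\ell)$, to a single Hermitian quadratic form in $\mathbf v_\ell$, and then reassemble the sum over $\ell$. First I would fix $j\in\mathcal J$ and $\ell\in\mathcal L'$ and aim to establish the pointwise identity ${\rm tr}(\mathbf Y_{j,\ell}\mathbf W_{\mathrm E,\ell}) = \mathbf v_\ell^H\mathbf Q_{j,\mathrm E,\ell}\mathbf v_\ell$. Since $\mathbf W_{\mathrm E,\ell}\succeq\mathbf 0$ with ${\rm rank}(\mathbf W_{\mathrm E,\ell}) = r_{\mathrm E,\ell}\geq 1$, I would write its spectral decomposition $\mathbf W_{\mathrm E,\ell} = \sum_{m=1}^{r_{\mathrm E,\ell}} q_{\ell,m}\mathbf w_{\mathrm E,\ell,m}\mathbf w_{\mathrm E,\ell,m}^H$, substitute the definition $\mathbf Y_{j,\ell} = \mathbf G_j^H{\rm diag}(\mathbf v_\ell)\mathbf Z{\rm diag}(\mathbf v_\ell^H)\mathbf G_j$, and use the cyclic property of the trace together with ${\rm tr}(\mathbf M\mathbf w\mathbf w^H) = \mathbf w^H\mathbf M\mathbf w$. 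This turns the trace into $\sum_{m} q_{\ell,m}\,\mathbf a_m^H{\rm diag}(\mathbf v_\ell)\mathbf Z{\rm diag}(\mathbf v_\ell^H)\mathbf a_m$, where $\mathbf a_m \triangleq \mathbf G_j\mathbf w_{\mathrm E,\ell,m}$.

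The crux is then a change of the ``active'' vector in each scalar term from $\mathbf a_m$ to $\mathbf v_\ell$. I would do this entrywise: writing the $(p,q)$ entry of $\mathbf Z$ as $Z_{pq}$, the scalar expands to $\sum_{p,q}\overline{[\mathbf a_m]_p}[\mathbf v_\ell]_p Z_{pq}\overline{[\mathbf v_\ell]_q}[\mathbf a_m]_q$. Relabeling the summation indices $p\leftrightarrow q$ and invoking the symmetry $Z_{pq}=Z_{qp}$ of the real matrix $\mathbf Z$, this regroups exactly as $\mathbf v_\ell^H\big({\rm diag}(\mathbf a_m)\mathbf Z({\rm diag}(\mathbf a_m))^H\big)\mathbf v_\ell$; equivalently one may use the commutation identity ${\rm diag}(\mathbf x)\mathbf y = {\rm diag}(\mathbf y)\mathbf x$ together with the fact that the form is real because $\mathbf Z$ is Hermitian. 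Summing over $m$ with weights $q_{\ell,m}$ then yields precisely $\mathbf v_\ell^H\mathbf Q_{j,\mathrm E,\ell}\mathbf v_\ell$ with $\mathbf Q_{j,\mathrm E,\ell}$ as defined in the statement.

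Finally I would multiply the pointwise identity by $\tau_\ell$ and sum over $\ell$. For every $\ell\notin\mathcal L'$ we have $\tau_\ell = 0$, so those terms contribute zero on both sides and the summation over $\mathcal L$ collapses to one over $\mathcal L'$; restricting to $\mathcal L'$ also keeps the spectral decomposition (and hence the condition $r_{\mathrm E,\ell}\geq 1$) well defined, since the degenerate case $\mathbf W_{\mathrm E,\ell} = \mathbf 0$ is excluded there. Carrying the $\forall j\in\mathcal J$ quantifier through then establishes the claimed equivalence of \eqref{E_cons:b} and \eqref{E_cons:b_eqv}.

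I expect the main obstacle to be bookkeeping of the conjugation pattern in the crucial regrouping step: inside the trace expression $\mathbf v_\ell$ appears unconjugated on the left of $\mathbf Z$ and conjugated on the right, whereas the target form $\mathbf v_\ell^H\mathbf Q_{j,\mathrm E,\ell}\mathbf v_\ell$ requires the opposite placement. Reconciling the two is exactly where the symmetry of $\mathbf Z$ (equivalently, the reality of the quadratic form) is indispensable, and it is the one place where a stray conjugate or transpose would silently break the identity.
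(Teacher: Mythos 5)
Your proposal is correct and follows essentially the same route as the paper's Appendix B: eigendecompose $\mathbf W_{\mathrm E,\ell}$, reduce the trace to scalar quadratic forms via ${\rm tr}\left(\mathbf M\mathbf w\mathbf w^H\right)=\mathbf w^H\mathbf M\mathbf w$, swap the roles of $\mathbf v_\ell$ and $\mathbf G_j\mathbf w_{\mathrm E,\ell,m}$ using the symmetry/reality of $\mathbf Z$, and collapse the sum to $\mathcal L'$ since $\tau_\ell=0$ terms vanish. Your entrywise index-relabeling with $Z_{pq}=Z_{qp}$ is just a more explicit rendering of the paper's conjugation-identity argument (equalities $(b)$ and $(c)$ in \eqref{appen_B_equ}), and your one small inaccuracy --- $\ell\in\mathcal L'$ means $\tau_\ell>0$, not $\mathbf W_{\mathrm E,\ell}\neq\mathbf 0$, so $r_{\mathrm E,\ell}\geq 1$ is an assumption of the lemma rather than a consequence of restricting to $\mathcal L'$ --- is harmless, since the identity holds trivially with $\mathbf Q_{j,\mathrm E,\ell}=\mathbf 0$ in that degenerate case.
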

\begin{proof}
\vspace{-3mm}
Please refer to Appendix \ref{Appen_B}.
\vspace{-2mm}
\end{proof}
Note that constraint \eqref{E_cons:b_eqv} is in the form of a super-level set of convex quadratic functions, which makes it non-convex but allows the application of the iterative SCA technique \cite{2010_Dinh_SCA_converge}. Specifically, given the local feasible point $\mathbf v_{\ell}^t$ in the $t$-th iteration of SCA, we can replace the convex term $\mathbf v_{\ell}^H\mathbf Q_{j,\mathrm E, \ell}\mathbf v_{\ell}$ with its first-order Taylor expansion-based lower bound, yielding a convex subset of constraint \eqref{E_cons:b_eqv} expressed as 
{\setlength\abovedisplayskip{5pt}
\setlength\belowdisplayskip{5pt}
\begin{align}\label{E_cons:b_eqv_sca}
\sum_{\ell\in\mathcal L'}\tau_\ell\left( 2{\rm Re}\left\lbrace\mathbf v_\ell^H\mathbf Q_{j,\mathrm E,\ell}\mathbf v_\ell^t\right\rbrace - \big(\mathbf v_\ell^t\big)^H\mathbf Q_{j,\mathrm E,\ell}\mathbf v_\ell^t\right) \geq \delta, \ \forall j\in\mathcal J. 
\end{align}}%
As a result, the optimization problem to be solved in the $t$-th iteration of SCA is given by
{\setlength\abovedisplayskip{5pt}
\setlength\belowdisplayskip{5pt} 
\begin{align}\label{E_sub2_sca}
\underset{\delta,\{\mathbf v_\ell\}}{\max} \hspace{3mm} \delta \hspace{7mm}
\text{s.t.} \hspace{3mm}\eqref{E_cons:b_eqv_sca}, \eqref{E_cons:e},
\end{align}}%
which is a convex quadratically constrained quadratic program (QCQP) and thus can be optimally solved by existing solvers such as CVX \cite{2004_S.Boyd_cvx}. In addition, the optimal $\{\mathbf v_\ell^\star\}$ must satisfy $\left| \left[\mathbf v_\ell^\star\right]_n\right| = 1$, $\forall \ell\in\mathcal L', n\in\mathcal N$, for achieving maximum signal reflection. By iteratively solving problem \eqref{E_sub2_sca}  until convergence is reached, we can obtain a locally optimal solution of problem \eqref{E_sub2} \cite{2010_Dinh_SCA_converge}. 

\subsection{Overall Algorithm}
In summary, the proposed algorithm updates $\left\lbrace \{\mathbf W_{\mathrm E,\ell}\}, \{\tau_\ell\} \right\rbrace$ and $\{\mathbf v_\ell\}$ in an alternating manner. The computational complexity of updating $\left\lbrace \{\mathbf W_{\mathrm E,\ell}\}, \{\tau_\ell\} \right\rbrace$ via solving problem \eqref{E_sub1} is $\mathcal O\left(\sqrt{M}\log_2\left({1}/{\varepsilon}\right)\left(\beta M^3 + \beta^2 M^2 + \beta^3\right)\right)$ \cite{2010_Imre_SDR_complexity} with $\beta \triangleq J + L$ and $\varepsilon$ denoting the prescribed accuracy, and that of updating $\{\mathbf v_\ell\}$ via iteratively solving problem \eqref{E_sub2_sca} until SCA converges is $\mathcal O\left(I_0\sqrt{LN + 2J}\log_2\left({1}/{\varepsilon}\right)N^3L^3J \right)$ \cite{2014_K.wang_complexity} with $I_0$ representing the required number of SCA iterations. This algorithm is guaranteed to converge since the objective value of problem \eqref{prob:feasi_E} is non-decreasing with the update iteration index and has a finite upper bound. Moreover, any limit point of the BCD procedure is a stationary point of problem \eqref{prob:feasi_E} \cite{2001_Tseng_BCD}. Once the objective value exceeds $E$ in the BCD procedure, we can stop the iterations and verify that (P1) and (P2) are feasible. For another case where the proposed algorithm converges with an objective value less than $E$, we consider (P1) and (P2) to be infeasible. 
 
\section{Proposed Algorithm for (P1)}\label{Sec:P1_solution}
In this section, we aim to solve (P1). 
First of all, we deal with the non-convex unit-modulus constraints in \eqref{P1_cons:h} by relaxing them to $\left| \left[\mathbf v_\ell\right]_n\right| \leq 1$, $\forall \ell\in\mathcal L, n\in\mathcal N$. As such, an upper bound of the optimal value of (P1) can be obtained by solving the following problem
\begin{subequations}\label{P1_relaxed}
	\setlength\abovedisplayskip{5pt}
	\setlength\belowdisplayskip{5pt}
	\begin{eqnarray}
	&\underset{\substack{\eta, \{\mathbf w_{k,\ell}\}, \{\mathbf W_{\mathrm E,\ell} \succeq \mathbf 0\}, \\\{ a_{k,\ell}\}, \{\tau_\ell\},\{\mathbf v_\ell\}} }{\max}&  \eta\\
	&\text{s.t.}&  \hspace{-8mm} \eqref{P1_cons:b}-\eqref{P1_cons:g}, \\
	&& \hspace{-8mm} \left|\left[\mathbf v_\ell\right]_n\right| \leq 1, \ \left[\mathbf v_\ell\right]_{N+1} = 1, \ \forall \ell\in\mathcal L, n\in\mathcal N.  \label{P1_cons:h_relaxed}
	\end{eqnarray}
\end{subequations}
To facilitate the solution of problem \eqref{P1_relaxed}, we define $\mathbf W_{k,\ell} = \mathbf w_{k,\ell}\mathbf w_{k,\ell}^H$, satisfying $\mathbf W_{k,\ell}\succeq \mathbf 0$ and ${\rm rank}\left(\mathbf W_{k,\ell}\right) \leq 1$, $\forall k\in\mathcal K$, $\ell\in\mathcal L$. Then, constraints \eqref{P1_cons:b}-\eqref{P1_cons:d} can be converted to
{\setlength\abovedisplayskip{6pt}
\setlength\belowdisplayskip{6pt}
\begin{align}
& \sum_{\ell\in\mathcal L}\tau_\ell\log_2\left(1 + \frac{a_{k,\ell}{\rm tr}\left(\mathbf X_{k,\ell}\mathbf W_{k,\ell}\right)}{\sum_{i\in \mathcal K\backslash\{k\}}a_{i,\ell}{\rm tr}\left(\mathbf X_{k,\ell}\mathbf W_{i,\ell}\right)  + {\rm tr}\left(\mathbf X_{k,\ell}\mathbf W_{\mathrm E,\ell}\right)+ \sigma_k^2} \right) \geq \eta, \ \forall k\in\mathcal K, \label{P1_cons:b_eqv}\\ 
& \sum_{\ell\in\mathcal L}\tau_\ell\left(\sum_{k\in\mathcal K}a_{k,\ell}{\rm tr}\left(\mathbf Y_{j,\ell}\mathbf W_{k,\ell}\right) + {\rm tr}\left(\mathbf Y_{j,\ell}\mathbf W_{\mathrm E, \ell} \right) \right) \geq E, \ \forall j\in\mathcal J, \label{P1_cons:c_eqv}\\
&  \sum_{k\in\mathcal K}a_{k,\ell}{\rm tr}\left( \mathbf W_{k,\ell}\right)  + {\rm tr}\left(\mathbf W_{\mathrm E,\ell} \right) \leq P, \ \forall \ell\in\mathcal L. \label{P1_cons:d_eqv}
\end{align}}%
By applying the change of variables $\mathbf S_{k,\ell} = \tau_\ell\mathbf W_{k,\ell}$, $\forall k\in\mathcal K$, $\ell\in\mathcal L$ and recalling the variables $\{\mathbf S_{\mathrm E, \ell}\}$ defined in the previous section, we can further transform constraints \eqref{P1_cons:b_eqv}-\eqref{P1_cons:d_eqv} into 
{\setlength\abovedisplayskip{7pt}
\setlength\belowdisplayskip{6pt}  
\begin{align}
& \sum_{\ell\in\mathcal L}\tau_\ell\log_2\left(1 + \frac{\frac{a_{k,\ell}{\rm tr}\left(\mathbf X_{k,\ell}\mathbf S_{k,\ell}\right)}{\tau_\ell}}{\frac{\sum_{i\in \mathcal K\backslash\{k\}}a_{i,\ell}{\rm tr}\left(\mathbf X_{k,\ell}\mathbf S_{i,\ell}\right)}{\tau_\ell} + \frac{{\rm tr}\left(\mathbf X_{k,\ell}\mathbf S_{\mathrm E,\ell}\right)}{\tau_\ell} + \sigma_k^2} \right) \geq \eta, \ \forall k\in\mathcal K, \label{P1_cons:b_eqv2}\\ 
& \sum_{\ell\in\mathcal L}\left(\sum_{k\in\mathcal K}a_{k,\ell}{\rm tr}\left(\mathbf Y_{j,\ell}\mathbf S_{k,\ell}\right) + {\rm tr}\left(\mathbf Y_{j,\ell}\mathbf S_{\mathrm E, \ell} \right) \right) \geq E, \ \forall j\in\mathcal J, \label{P1_cons:c_eqv2}\\
&  \sum_{k\in\mathcal K}a_{k,\ell}{\rm tr}\left( \mathbf S_{k,\ell}\right)  + {\rm tr}\left(\mathbf S_{\mathrm E,\ell} \right)  \leq \tau_\ell P, \ \forall \ell\in\mathcal L, \label{P1_cons:d_eqv2}
\end{align}}%
with $\mathbf S_{k,\ell}\succeq \mathbf 0$, ${\rm rank}\left(\mathbf S_{k,\ell}\right) \leq 1$, and $\mathbf S_{\mathrm E, \ell}\succeq \mathbf 0$, $\forall k\in\mathcal K, \ell\in\mathcal L$.
Next, the big-M formulation \cite{2016_Kwan_big-M} is adopted to tackle the coupling between the binary variables $\{a_{k,\ell}\}$ and the continuous variables $\{\mathbf S_{k,\ell}\}$ in \eqref{P1_cons:b_eqv2}-\eqref{P1_cons:d_eqv2}. Specifically, we introduce auxiliary variables $\tilde{\mathbf S}_{k,\ell} = a_{k,\ell}\mathbf S_{k,\ell}$, $\forall k\in\mathcal K$, $\ell\in\mathcal L$, and impose the following additional constraints: 
\begin{subequations}\label{big_M_cons}
	\begin{align}
	& \tilde{\mathbf S}_{k,\ell} \preceq a_{k,\ell}PT\mathbf I, \ \forall k\in\mathcal K, \ell \in\mathcal L, \label{big_M_cons:a}\\
	& \tilde{\mathbf S}_{k,\ell} \preceq \mathbf S_{k,\ell}, \ \tilde{\mathbf S}_{k,\ell} \succeq \mathbf 0, \ \forall k\in\mathcal K, \ell \in\mathcal L, \label{big_M_cons:b}\\
	& \tilde{\mathbf S}_{k,\ell} \succeq \mathbf S_{k,\ell} - (1 - a_{k,\ell})PT\mathbf I, \ \forall k\in\mathcal K, \ell \in\mathcal L, \label{big_M_cons:c}\\
	& {\rm rank}\left(\tilde{\mathbf S}_{k,\ell}\right) \leq 1, \ \forall k\in\mathcal K, \ell \in\mathcal L. \label{big_M_cons:d}
	\end{align}
\end{subequations}
It can be verified that when the constraints in \eqref{P1_cons:f} and \eqref{big_M_cons} are satisfied, constraints \eqref{P1_cons:b_eqv2}-\eqref{P1_cons:d_eqv2} are respectively equivalent to
\begin{align}
& \sum_{\ell\in\mathcal L}\tau_\ell\log_2\left(1 + \frac{\frac{{\rm tr}\left(\mathbf X_{k,\ell}\tilde{\mathbf S}_{k,\ell}\right)}{\tau_\ell}}{\frac{\sum_{i\in \mathcal K\backslash\{k\}}{\rm tr}\left(\mathbf X_{k,\ell}\tilde{\mathbf S}_{i,\ell}\right)}{\tau_\ell} + \frac{{\rm tr}\left(\mathbf X_{k,\ell}\mathbf S_{\mathrm E,\ell}\right)}{\tau_\ell} + \sigma_k^2} \right) \geq \eta, \ \forall k\in\mathcal K, \label{P1_cons:b_eqv3}\\ 
& \sum_{\ell\in\mathcal L}\left(\sum_{k\in\mathcal K}{\rm tr}\left(\mathbf Y_{j,\ell}\tilde{\mathbf S}_{k,\ell}\right) + {\rm tr}\left(\mathbf Y_{j,\ell}\mathbf S_{\mathrm E, \ell} \right) \right) \geq E, \ \forall j\in\mathcal J, \label{P1_cons:c_eqv3}\\
& \sum_{k\in\mathcal K}{\rm tr}\left(\tilde{\mathbf S}_{k,\ell}\right)  + {\rm tr}\left(\mathbf S_{\mathrm E,\ell} \right) \leq \tau_\ell P, \ \forall \ell\in\mathcal L. \label{P1_cons:d_eqv3}
\end{align}
Based on the above results, by replacing constraints \eqref{P1_cons:b}-\eqref{P1_cons:d} in problem \eqref{P1_relaxed} with \eqref{P1_cons:b_eqv3}-\eqref{P1_cons:d_eqv3} and taking \eqref{big_M_cons} into account, we can rewrite problem \eqref{P1_relaxed} in its equivalent form, as follows  
\begin{align}\label{P1_relaxed_eqv}
\underset{\eta, \mathcal Z}{\max} \hspace{3mm} \eta \hspace{8mm}
\text{s.t.} \hspace{3mm}  \eqref{P1_cons:e}-\eqref{P1_cons:g}, \eqref{P1_cons:h_relaxed}, \eqref{big_M_cons}-\eqref{P1_cons:d_eqv3}, 
\end{align}
where $\mathcal Z \triangleq \left\lbrace \left\lbrace \tilde{\mathbf S}_{k,\ell} \in \mathbb H^M\right\rbrace, \{\mathbf S_{k,\ell} \in \mathbb H^M\}, \{\mathbf S_{\mathrm E,\ell} \succeq \mathbf 0\}, \{ a_{k,\ell}\}, \{\tau_\ell\},\{\mathbf v_\ell\}\right\rbrace$. Since the binary constraint \eqref{P1_cons:f} is an obstacle to solving problem \eqref{P1_relaxed_eqv}, we equivalently re-express it as \looseness=-1
\begin{subequations}
	\begin{align}
	& 0 \leq a_{k,\ell} \leq 1, \ \forall k\in\mathcal K, \ell \in\mathcal L, \label{P1_cons:f_1}\\
	& a_{k,\ell} - a_{k,\ell}^2 \leq 0, \ \forall k\in\mathcal K, \ell\in\mathcal L. \label{P1_cons:f_2}
	\end{align}
\end{subequations}
Note that \eqref{P1_cons:f_1} is a linear constraint while \eqref{P1_cons:f_2} is a reverse convex constraint that yields a disconnected feasible region. To handle \eqref{P1_cons:f_2}, 
we incorporate it into the objective function of problem \eqref{P1_relaxed_eqv} via a multiplicative penalty function based on the penalty method \cite{2012_Le_penalty}, yielding the following problem
\begin{align}\label{P1_relaxed_eqv2}
\underset{\eta,\mathcal Z}{\max} \hspace{3mm} \eta - \rho h\big(\{a_{k,\ell}\}\big) \hspace{7mm}
\text{s.t.} \hspace{3mm}  \eqref{P1_cons:e}, \eqref{P1_cons:g}, \eqref{P1_cons:h_relaxed}, \eqref{big_M_cons}-\eqref{P1_cons:d_eqv3}, \eqref{P1_cons:f_1}, 
\end{align} 
where $h\big(\{a_{k,\ell}\}\big) \triangleq \sum_{\ell \in\mathcal L}\sum_{k \in\mathcal K} \left(a_{k,\ell} - a_{k,\ell}^2\right)$ and $\rho > 0$ serves as a penalty parameter to penalize the violation of constraint \eqref{P1_cons:f_2}. Notably, to maximize the objective function of problem \eqref{P1_relaxed_eqv2} when $\rho\rightarrow \infty$, the optimal $\{a_{k,\ell}^\star\}$ should meet the condition $h\big(\{a_{k,\ell}^\star\}\big)\leq 0$. On the other hand, since $\{a_{k,\ell}^\star\}$ satisfy constraint \eqref{P1_cons:f_1}, we have $h\big(\{a_{k,\ell}^\star\}\big) \geq 0$. Thus, $h\big(\{a_{k,\ell}^\star\}\big) = 0$ and accordingly $a_{k,\ell}^\star\in\{0,1\}$ follows, $\forall k\in\mathcal K, \ell\in\mathcal L$, which verifies the equivalence between problems \eqref{P1_relaxed_eqv} and \eqref{P1_relaxed_eqv2}. It is worth mentioning that since setting $\rho$ significantly large at the very beginning may render this approach ineffective \cite{2000_Runarsson_penalty}, we initialize $\rho$ to a small value to find a good starting point and then solve problem \eqref{P1_relaxed_eqv2} iteratively with $\rho$ increasing with the iterations until $h\big({\{a_{k,\ell}^\star\}}\big) \rightarrow 0$. 

For any given $\rho$, problem \eqref{P1_relaxed_eqv2} is still hard to solve directly due to the non-concave objective function and the non-convex constraints in  \eqref{big_M_cons:d}, \eqref{P1_cons:b_eqv3}, and \eqref{P1_cons:c_eqv3}. Nevertheless, it is observed that either given or only optimizing $\{\mathbf v_\ell\}$, the resulting problem is more tractable. This motivates us to apply the BCD method as in the previous section to solve problem \eqref{P1_relaxed_eqv2} suboptimally by alternately optimizing $\tilde{\mathcal Z} \triangleq \mathcal Z\backslash\{\mathbf v_\ell\}$ and $\{\mathbf v_{\ell}\}$, elaborated as follows. 

\subsection{Optimizing $\tilde{\mathcal Z}$ for Given $\{\mathbf v_{\ell}\}$} 
With given $\{\mathbf v_\ell\}$, all the other variables in $\mathcal Z$ can be jointly optimized by solving the subproblem of \eqref{P1_relaxed_eqv2}, which is expressed as
\begin{subequations}\label{P1_relaxed_eqv2_sub1}
	\begin{eqnarray}
	&\underset{\eta,\tilde{\mathcal Z}}{\max}& \eta - \rho h\big(\{a_{k,\ell}\}\big)\\
	&\text{s.t.}& \eqref{P1_cons:e}, \eqref{P1_cons:g}, \eqref{big_M_cons}, \eqref{P1_cons:c_eqv3}, \eqref{P1_cons:d_eqv3}, \eqref{P1_cons:f_1}, \\
	&& \sum_{\ell\in\mathcal L}\left( f_{k,\ell} - g_{k,\ell}\right)  \geq \eta, \ \forall k\in\mathcal K,  \label{P1_relaxed_eqv2_sub1_cons:c} 
	\end{eqnarray}
\end{subequations}
where constraint \eqref{P1_relaxed_eqv2_sub1_cons:c} is the equivalent form of constraint \eqref{P1_cons:b_eqv3}, with the expressions of the concave functions $f_{k,\ell}$ and $g_{k,\ell}$ given by
\begin{align}
& f_{k,\ell} = \tau_\ell\log_2\left( \frac{\sum_{i\in \mathcal K}{\rm tr}\left(\mathbf X_{k,\ell}\tilde{\mathbf S}_{i,\ell}\right)}{\tau_\ell} + \frac{{\rm tr}\left(\mathbf X_{k,\ell}\mathbf S_{\mathrm E,\ell}\right)}{\tau_\ell} + \sigma_k^2 \right), \ \forall k\in\mathcal K, \ell\in\mathcal L, \label{express_f}\\ 
& g_{k,\ell} = \tau_\ell\log_2\left( \frac{\sum_{i\in \mathcal K\backslash\{k\}}{\rm tr}\left(\mathbf X_{k,\ell}\tilde{\mathbf S}_{i,\ell}\right)}{\tau_\ell} + \frac{{\rm tr}\left(\mathbf X_{k,\ell}\mathbf S_{\mathrm E,\ell}\right)}{\tau_\ell} + \sigma_k^2 \right), \ \forall k\in\mathcal K, \ell\in\mathcal L, \label{express_g}
\end{align}
respectively. We observe that the convex term $a_{k,\ell}^2$ in $h\big(\{a_{k,\ell}\}\big)$ makes the objective function non-concave while the concave term $g_{k,\ell}$ in constraint \eqref{P1_relaxed_eqv2_sub1_cons:c} makes this constraint non-convex. These, together with the rank constraints in \eqref{big_M_cons:d}, lead to the non-convexity of problem \eqref{P1_relaxed_eqv2_sub1}. To handle this problem, we leverage the SCA technique as in the previous section. Specifically, since the first-order Taylor expansion of any convex (concave) function at any point is its global lower (upper) bound, the following inequalities hold:
\begin{align}
& a_{k,\ell}^2 \geq -\left( a_{k,\ell}^r\right)^2 + 2a_{k,\ell}^ra_{k,\ell} \triangleq \chi^{\rm lb, \it r}\left(a_{k,\ell}\right), \ \forall k\in\mathcal K, \ell\in\mathcal L, \label{express_a_lower}\\ 
& g_{k,\ell}\left(\hat{\mathbf S}_{k,\ell}, \mathbf S_{\mathrm E, \ell}, \tau_\ell\right)  \leq \tau_\ell^r\log_2\left(\Upsilon_{k,\ell}^r\right) + \frac{\sum_{i\in \mathcal K\backslash\{k\}}{\rm tr}\left(\mathbf X_{k,\ell}\left( \tilde{\mathbf S}_{i,\ell} \!-\! \tilde{\mathbf S}_{i,\ell}^r\right)\right) + {\rm tr}\left(\mathbf X_{k,\ell}\left( \mathbf S_{\mathrm E,\ell} \!-\! \mathbf S_{\mathrm E,\ell}^r\right) \right)}{\Upsilon_{k,\ell}^r\ln2} \nonumber\\
& \hspace{1.5cm} + \left(\log_2\left(\Upsilon_{k,\ell}^r\right) - \frac{\Upsilon_{k,\ell}^r \!-\! \sigma_k^2 }{\Upsilon_{k,\ell}^r\ln2}\right)\left(\tau_\ell \!-\! \tau_\ell^r\right) \triangleq g_{k,\ell}^{\rm ub, \it r}\left(\hat{\mathbf S}_{k,\ell}, \mathbf S_{\mathrm E, \ell}, \tau_\ell\right), \ \forall k\in\mathcal K, \ell\in\mathcal L, \label{express_g_upper}
\end{align}
where $\hat{\mathbf S}_{k,\ell}$ denotes the collection of the variables $\left\lbrace \tilde{\mathbf S}_{i,\ell}\right\rbrace_{\forall i\in\mathcal K\backslash\{k\}}$ and $\Upsilon_{k,\ell}^r = \frac{\sum_{i\in \mathcal K\backslash\{k\}}{\rm tr}\left(\mathbf X_{k,\ell}\tilde{\mathbf S}_{i,\ell}^r\right)}{\tau_\ell^r} + \frac{{\rm tr}\left(\mathbf X_{k,\ell}\mathbf S_{\mathrm E,\ell}^r\right)}{\tau_\ell^r} + \sigma_k^2$. In addition, $a_{k,\ell}^r$, $\tilde{\mathbf S}_{i,\ell}^r$, $\mathbf S_{\mathrm E,\ell}^r$, and $\tau_\ell^r$ represent the given local points in the $r$-th iteration of SCA. 

By replacing the term $a_{k,\ell}^2$ in $h\big(\{a_{k,\ell}\}\big)$ with $\chi^{\rm lb, \it r}\left(a_{k,\ell}\right)$ and the term $g_{k,\ell}$ in \eqref{P1_relaxed_eqv2_sub1_cons:c} with $g_{k,\ell}^{\rm ub, \it r}\left(\hat{\mathbf S}_\ell, \mathbf S_{\mathrm E, \ell}, \tau_\ell\right)$, a performance lower bound of problem \eqref{P1_relaxed_eqv2_sub1} can be obtained by solving 
\vspace{-1mm}
\begin{subequations}\label{P1_relaxed_eqv2_sub1_sca}
	\setlength\abovedisplayskip{-3pt}
	\begin{eqnarray}
	&\underset{\eta,\tilde{\mathcal Z}}{\max}&  \eta - \rho h^{\rm ub, \it r}\big(\{a_{k,\ell}\}\big) \label{P1_relaxed_eqv2_sub1_sca_obj}\\
	&\text{s.t.}& \eqref{P1_cons:e}, \eqref{P1_cons:g}, \eqref{big_M_cons}, \eqref{P1_cons:c_eqv3}, \eqref{P1_cons:d_eqv3}, \eqref{P1_cons:f_1}, \\
	&& \sum_{\ell\in\mathcal L}\left( f_{k,\ell} - g_{k,\ell}^{\rm ub, \it r}\left(\hat{\mathbf S}_\ell, \mathbf S_{\mathrm E, \ell}, \tau_\ell\right)\right) \geq \eta, \ \forall k\in\mathcal K,  \label{P1_relaxed_eqv2_sub1_cons:c_sca}
	\end{eqnarray}
\end{subequations}
where $h^{\rm ub, \it r}\big(\{a_{k,\ell}\}\big) \triangleq \sum_{\ell \in\mathcal L}\sum_{k \in\mathcal K} \left(a_{k,\ell} - \chi^{\rm lb, \it r}\left(a_{k,\ell}\right)\right)$. If we drop the non-convex rank constraints in \eqref{big_M_cons:d}, problem \eqref{P1_relaxed_eqv2_sub1_sca} is reduced to a convex SDP that can be solved exactly using off-the-shelf solvers, e.g., CVX \cite{2004_S.Boyd_cvx}. However, the obtained $\left\lbrace\tilde{\mathbf S}_{k,\ell}\right\rbrace$ cannot be guaranteed to satisfy constraint \eqref{big_M_cons:d}. Therefore, instead of dropping constraint \eqref{big_M_cons:d}, we equivalently transform it into \looseness=-1 
\begin{align}\label{big_M_cons:d_eqv}
{\rm tr}\left(\tilde{\mathbf S}_{k,\ell}\right) - \left\|\tilde{\mathbf S}_{k,\ell}\right\|_2 \leq 0, \ \forall k\in\mathcal K, \ell\in\mathcal L,
\end{align}
which is a reverse convex constraint. Similar to problem \eqref{P1_relaxed_eqv2}, we incorporate constraint \eqref{big_M_cons:d_eqv} into the objective function in \eqref{P1_relaxed_eqv2_sub1_sca_obj} by introducing a penalty parameter $\mu > 0$ and then convert problem \eqref{P1_relaxed_eqv2_sub1_sca} to  
\begin{subequations}\label{P1_relaxed_eqv2_sub1_sca_eqv}
	\begin{eqnarray}
	&\underset{\eta,\tilde{\mathcal Z}}{\max}&  \eta - \rho h^{\rm ub, \it r}\big(\{a_{k,\ell}\}\big) - \mu q\left(\left\lbrace\tilde{\mathbf S}_{k,\ell}\right\rbrace\right)  \\
	&\text{s.t.}& \eqref{P1_cons:e}, \eqref{P1_cons:g},  \eqref{big_M_cons:a}-\eqref{big_M_cons:c}, \eqref{P1_cons:c_eqv3}, \eqref{P1_cons:d_eqv3}, \eqref{P1_cons:f_1}, \eqref{P1_relaxed_eqv2_sub1_cons:c_sca},
	\end{eqnarray}
\end{subequations}
where $q\left(\left\lbrace\tilde{\mathbf S}_{k,\ell}\right\rbrace\right) \triangleq \sum_{\ell\in\mathcal L}\sum_{k \in\mathcal K}\left( {\rm tr}\left(\tilde{\mathbf S}_{k,\ell}\right) - \left\|\tilde{\mathbf S}_{k,\ell}\right\|_2\right)$. When $\mu\rightarrow\infty$, solving problem \eqref{P1_relaxed_eqv2_sub1_sca_eqv} yields an identical solution to problem \eqref{P1_relaxed_eqv2_sub1_sca}. Despite having a convex feasible set, problem \eqref{P1_relaxed_eqv2_sub1_sca_eqv} is non-convex due to the convexity of the term $\left\|\tilde{\mathbf S}_{k,\ell}\right\|_2$ in $q\left(\left\lbrace\tilde{\mathbf S}_{k,\ell}\right\rbrace\right)$. By replacing $\left\|\tilde{\mathbf S}_{k,\ell}\right\|_2$ with its first-order Taylor expansion-based
lower bound, we can approximate problem \eqref{P1_relaxed_eqv2_sub1_sca_eqv} as \begin{subequations}\label{P1_relaxed_eqv2_sub1_sca_eqv_sca}
	\begin{eqnarray}
	&\underset{\eta,\tilde{\mathcal Z}}{\max}&  \eta - \rho h^{\rm ub, \it r}\big(\{a_{k,\ell}\}\big) - \mu q^{\rm ub, \it r}\left(\left\lbrace\tilde{\mathbf S}_{k,\ell}\right\rbrace\right) \\
	&\text{s.t.}& \eqref{P1_cons:e}, \eqref{P1_cons:g},  \eqref{big_M_cons:a}-\eqref{big_M_cons:c}, \eqref{P1_cons:c_eqv3}, \eqref{P1_cons:d_eqv3}, \eqref{P1_cons:f_1}, \eqref{P1_relaxed_eqv2_sub1_cons:c_sca},
	\end{eqnarray}
\end{subequations}
where $q^{\rm ub, \it r}\left(\left\lbrace\tilde{\mathbf S}_{k,\ell}\right\rbrace\right) \triangleq \sum_{\ell\in\mathcal L}\sum_{k \in\mathcal K}\left( {\rm tr}\left(\tilde{\mathbf S}_{k,\ell}\right) \!-\! \left\|\tilde{\mathbf S}_{k,\ell}^r\right\|_2 \!-\! \left(\tilde{\mathbf s}_{k,\ell}^{\max, r}\right)^H\left(\tilde{\mathbf S}_{k,\ell} \!-\! \tilde{\mathbf S}_{k,\ell}^r\right)\tilde{\mathbf s}_{k,\ell}^{\max, r}\right)$ with $\tilde{\mathbf s}_{k,\ell}^{\max, r}$ being the eigenvector that corresponds to the largest
eigenvalue of $\tilde{\mathbf S}_{k,\ell}^r$. Since problem \eqref{P1_relaxed_eqv2_sub1_sca_eqv_sca} is a convex SDP, standard solvers such as CVX \cite{2004_S.Boyd_cvx} can be used to find its optimal solution. 

\begin{algorithm}[!t]  
	\caption{Proposed algorithm for problem \eqref{P1_relaxed_eqv2_sub1}}  \label{Alg1}  
	\begin{algorithmic}[1]
		\STATE Initialize $\tilde{\mathcal Z}^0 \triangleq \left\lbrace \left\lbrace \tilde{\mathbf S}_{k,\ell}^0\right\rbrace, \{\mathbf S_{k,\ell}^0\}, \{\mathbf S_{\mathrm E,\ell}^0\}, \{ a_{k,\ell}^0\}, \{\tau_\ell^0\}\right\rbrace$, $\mu > 0$, and $c_1 > 1$.  
		\REPEAT
		\STATE Set $r = 0$. 
		\REPEAT 
		\STATE Obtain $\tilde{\mathcal Z}^{r+1}$ by solving problem \eqref{P1_relaxed_eqv2_sub1_sca_eqv_sca} with given $\tilde{\mathcal Z}^{r}$. 
		\STATE $r \leftarrow r + 1$.
		\UNTIL The fractional increase of the objective value of problem \eqref{P1_relaxed_eqv2_sub1_sca_eqv_sca} between two consecutive iterations falls below a threshold $\epsilon_1 > 0$. 
		\STATE $\tilde{\mathcal Z}^0 \leftarrow \tilde{\mathcal Z}^r$ and $\mu\leftarrow c_1\mu$.
		\UNTIL $q\left(\left\lbrace\tilde{\mathbf S}_{k,\ell}^r\right\rbrace\right)$ is below a threshold $\varsigma_1 > 0$. 
		\STATE Output $\tilde{\mathcal Z}^r$ as a locally optimal solution of problem \eqref{P1_relaxed_eqv2_sub1}. 
	\end{algorithmic} 
\end{algorithm}

Based on the above, we provide in Algorithm \ref{Alg1} the details of solving problem suboptimally \eqref{P1_relaxed_eqv2_sub1} via combining the SCA and the penalty method, where $c_1 > 1$ is a scaling factor. The inner loop of Algorithm \ref{Alg1} is used to iteratively solve problem \eqref{P1_relaxed_eqv2_sub1_sca_eqv_sca} under fixed $\mu$, whose convergence is guaranteed since the objective value is non-decreasing over the iterations and also bounded from above. In the outer loop, by iteratively increasing $\mu$ via $\mu\leftarrow c_1\mu$, we enforce $q\left(\left\lbrace\tilde{\mathbf S}_{k,\ell}\right\rbrace\right) \rightarrow 0$, such that the obtained solution satisfy the rank constraints on $\left\lbrace\tilde{\mathbf S}_{k,\ell}\right\rbrace$. In this way, Algorithm \ref{Alg1} is guaranteed to converge to a stationary point of problem \eqref{P1_relaxed_eqv2_sub1} \cite{2016_Sun_stationary}. 
\vspace{-3mm}
\begin{rem}\label{rem1}
	It is worth mentioning that if there are no phase errors at the IRS elements, the matrix $\mathbf X_{k,\ell}$ in constraint \eqref{P1_relaxed_eqv2_sub1_cons:c_sca} can be replaced by $\mathbf H_k^H\mathbf v_\ell\mathbf v_\ell^H\mathbf H_k \triangleq \hat{\mathbf X}_{k,\ell}$ and we have ${\rm rank}\left(\hat{\mathbf X}_{k,\ell}\right) = 1$, $\forall k\in\mathcal K, \ell\in\mathcal L$. With this condition, it can be proved that, for arbitrary direct and cascaded channels, if the optimal solution obtained by solving problem \eqref{P1_relaxed_eqv2_sub1_sca} with the rank constraint \eqref{big_M_cons:d} removed (or equivalently, problem  \eqref{P1_relaxed_eqv2_sub1_sca_eqv_sca} with $\mu = 0$) violates constraint \eqref{big_M_cons:d}, we can always construct an alternative optimal solution that satisfies constraint \eqref{big_M_cons:d} by using $\mathbf S_{\mathrm E, \ell}$ to absorb the non-rank-one part of each $\tilde{\mathbf S}_{k,\ell}$. The corresponding proof is similar to that in \cite[Appendix B]{2020_Xianghao_secure_rank1}, and we omit it for brevity. However, in the presence of the phase errors, we cannot prove the above result by following the same derivation as in \cite[Appendix B]{2020_Xianghao_secure_rank1} since $\mathbf X_{k,\ell}$ is generally of high rank. Despite this, almost all of our simulations show that solving problem \eqref{P1_relaxed_eqv2_sub1_sca_eqv_sca} with even a sufficiently small $\mu$ via CVX can yield a rank-one optimal solution. Thus, the characterization of the optimal solution structure of problem \eqref{P1_relaxed_eqv2_sub1_sca_eqv_sca} (or problem \eqref{P1_relaxed_eqv2_sub1_sca})  deserves further study. 
\end{rem}
\vspace{-3mm}    

\vspace{-1mm}
\subsection{Optimizing $\{\mathbf v_{\ell}\}$ for Given $\tilde{\mathcal Z}$} Given any feasible $\tilde{\mathcal Z}$, by introducing slack variables $\{\lambda_{k,\ell}\}$ and ignoring the constant term $-\rho h\left(\{a_{k,\ell}\}\right)$ in the objective function of problem \eqref{P1_relaxed_eqv2}, we can equivalently express the subproblem of \eqref{P1_relaxed_eqv2} w.r.t. $\{\mathbf v_{\ell}\}$ as   
\begin{subequations}\label{P1_relaxed_eqv2_sub2}
	\setlength\abovedisplayskip{5pt}
	\setlength\belowdisplayskip{5pt}
	\begin{eqnarray}
	&\hspace{-3mm}\underset{\eta,\{\mathbf v_\ell\},\{\lambda_{k,\ell}\}}{\max}& \eta \\
	&\text{s.t.}& \hspace{-6mm} \eqref{P1_cons:h_relaxed}, \eqref{P1_cons:c_eqv3}, \\
	&& \hspace{-6mm} \sum_{\ell\in\mathcal L'}\tau_\ell\log_2\left( 1 + \lambda_{k,\ell} \right) \geq \eta, \ \forall k\in\mathcal K,  \label{P1_relaxed_eqv2_sub2_cons:c}\\
	&& \hspace{-6mm} \frac{\frac{{\rm tr}\left(\mathbf X_{k,\ell}\tilde{\mathbf S}_{k,\ell}\right)}{\tau_\ell}}{\lambda_{k,\ell}} \geq \frac{\sum_{i\in \mathcal K\backslash\{k\}}{\rm tr}\left(\mathbf X_{k,\ell}\tilde{\mathbf S}_{i,\ell}\right)}{\tau_\ell} + \frac{{\rm tr}\left(\mathbf X_{k,\ell}\mathbf S_{\mathrm E,\ell}\right)}{\tau_\ell} + \sigma_k^2, \ \forall k\in\mathcal K, \ell\in\mathcal L', \label{P1_relaxed_eqv2_sub2_cons:d}
	\end{eqnarray}
\end{subequations}
where $\mathcal L' = \{\ell|\tau_\ell > 0\} \subseteq \mathcal L$, and the constraints in \eqref{P1_relaxed_eqv2_sub2_cons:c} and \eqref{P1_relaxed_eqv2_sub2_cons:d} are transformed from those in \eqref{P1_cons:b_eqv3}, which incurs no loss of optimality since there always exists an optimal solution to problem \eqref{P1_relaxed_eqv2_sub2} that makes the constraints in \eqref{P1_relaxed_eqv2_sub2_cons:d} satisfied with equality. Observe that the optimization variables $\{\mathbf v_\ell\}$ are not exposed in the current forms of constraints \eqref{P1_cons:c_eqv3} and \eqref{P1_relaxed_eqv2_sub2_cons:d}. To facilitate the solution development of problem \eqref{P1_relaxed_eqv2_sub2}, we recast \eqref{P1_cons:c_eqv3} and \eqref{P1_relaxed_eqv2_sub2_cons:d} as  
\begin{align}
& \sum_{\ell\in\mathcal L'}\left(\sum_{k\in\mathcal K}\mathbf v_\ell^H\mathbf A_{j,k,\ell}\mathbf v_\ell + \mathbf v_\ell^H\mathbf B_{j,\mathrm E,\ell}\mathbf v_\ell\right) \geq E, \ \forall j\in\mathcal J, \label{P1_cons:c_eqv3_eqv}\\
& \frac{\frac{\mathbf v_\ell^H\mathbf C_{k,k,\ell}\mathbf v_\ell}{\tau_\ell}}{\lambda_{k,\ell}} \geq \frac{\sum_{i\in \mathcal K\backslash\{k\}}\mathbf v_\ell^H\mathbf C_{k,i,\ell} \mathbf v_\ell}{\tau_\ell} + \frac{\mathbf v_\ell^H\mathbf D_{k,\mathrm E,\ell}\mathbf v_\ell}{\tau_\ell} + \sigma_k^2, \ \forall k\in\mathcal K, \ell\in\mathcal L', \label{P1_relaxed_eqv2_sub2_cons:d_eqv}
\end{align}%
where $\mathbf A_{j,k,\ell} = {\rm diag}\left(\mathbf G_j\tilde{\mathbf s}_{k,\ell}\right)\mathbf Z\left({\rm diag}\left(\mathbf G_j\tilde{\mathbf s}_{k,\ell}\right)\right)^H$ if $\tilde{\mathbf S}_{k,\ell} \neq \mathbf 0$ and $\mathbf A_{j,k,\ell} = \mathbf 0$ otherwise, $\mathbf B_{j,\mathrm E,\ell} = \sum_{m=1}^{\pi_{\rm E,\ell}}b_{\ell,m}{\rm diag}\left(\mathbf G_j\mathbf s_{\rm E,\ell,m}\right)\mathbf Z\left( {\rm diag}\left(\mathbf G_j\mathbf s_{\rm E,\ell,m}\right)\right)^H$ if ${\mathbf S}_{\mathrm E,\ell} \neq \mathbf 0$ and $\mathbf B_{j,\mathrm E,\ell} = \mathbf 0$ otherwise, $\mathbf C_{k,i,\ell} = {\rm diag}\left(\mathbf H_k\tilde{\mathbf s}_{i,\ell}\right)\mathbf Z\left({\rm diag}\left(\mathbf H_k\tilde{\mathbf s}_{i,\ell}\right)\right)^H$ if $\tilde{\mathbf S}_{i,\ell} \neq \mathbf 0$ and $\mathbf C_{k,i,\ell} = \mathbf 0$ otherwise, and $\mathbf D_{k,\mathrm E,\ell} = \sum_{m=1}^{\pi_{\rm E,\ell}}b_{\ell,m}{\rm diag}\left(\mathbf H_k\mathbf s_{\rm E,\ell,m}\right)\mathbf Z\left( {\rm diag}\left(\mathbf H_k\mathbf s_{\rm E,\ell,m}\right)\right)^H$ if $\mathbf S_{\mathrm E,\ell} \neq \mathbf 0$ and $\mathbf D_{k,\mathrm E,\ell} = \mathbf 0$ otherwise, $\forall j\in\mathcal J, k,i\in\mathcal K, \ell\in\mathcal L$. In addition, $\tilde{\mathbf s}_{k,\ell}$ is obtained from $\tilde{\mathbf S}_{k,\ell}$ by performing the Cholesky
decomposition, i.e., $\tilde{\mathbf S}_{k,\ell} = \tilde{\mathbf s}_{k,\ell}\tilde{\mathbf s}_{k,\ell}^H$, and $\{b_{\ell,m}\}$ and $\{\mathbf s_{\mathrm E,\ell,m}\}$ are obtained from the eigenvalue
decomposition of $\mathbf S_{\mathrm E,\ell}$ with $\mathbf S_{\rm E,\ell} = \sum_{m=1}^{\pi_{\rm E,\ell}}b_{\ell,m}\mathbf s_{\rm E,\ell,m}\mathbf s_{\rm E,\ell,m}^H$ and $\pi_{\mathrm E,\ell} = {\rm rank}\left(\mathbf S_{\mathrm E,\ell}\right)$. The proofs of the equivalence between \eqref{P1_cons:c_eqv3} and \eqref{P1_cons:c_eqv3_eqv} and between \eqref{P1_relaxed_eqv2_sub2_cons:d} and \eqref{P1_relaxed_eqv2_sub2_cons:d_eqv} are similar to that in Appendix \ref{Appen_B} for Lemma \ref{lem} and are omitted here for brevity.

It is obvious that constraints \eqref{P1_cons:c_eqv3_eqv} and \eqref{P1_relaxed_eqv2_sub2_cons:d_eqv} are non-convex since the quadratic terms in the left-hand-sides of them are convex w.r.t. $\mathbf v_\ell$, which motivates us to convexify these two constraints via the SCA technique. To be specific, by replacing the left-hand-sides of the non-convex constraints \eqref{P1_cons:c_eqv3_eqv} and \eqref{P1_relaxed_eqv2_sub2_cons:d_eqv} with their respective first-order Taylor expansions at the given local points $\{\mathbf v_\ell^q\}$ in the $q$-th iteration of SCA, \eqref{P1_cons:c_eqv3_eqv} and \eqref{P1_relaxed_eqv2_sub2_cons:d_eqv} can be approximated as the following convex constraints:
\begin{align}
& \sum_{\ell\in\mathcal L'}\left(\sum_{k\in\mathcal K} \mathcal F^{\rm lb,\it q}_{\mathbf A_{j,k,\ell}}\left(\mathbf v_\ell\right) + \mathcal F^{\rm lb,\it q}_{\mathbf B_{j,\mathrm E,\ell}}\left(\mathbf v_\ell\right)\right) \geq E, \ \forall j\in\mathcal J, \label{P1_cons:c_eqv3_eqv_sca}\\
& \mathcal G^{\rm lb, \it q}\left(\mathbf v_\ell, \lambda_{k,\ell}\right) \geq \frac{\sum_{i\in \mathcal K\backslash\{k\}}\mathbf v_\ell^H\mathbf C_{k,i,\ell} \mathbf v_\ell}{\tau_\ell} + \frac{\mathbf v_\ell^H\mathbf D_{k,\mathrm E,\ell}\mathbf v_\ell}{\tau_\ell} + \sigma_k^2, \ \forall k\in\mathcal K, \ell\in\mathcal L', \label{P1_relaxed_eqv2_sub2_cons:d_eqv_sca}
\end{align}%
where $\mathcal F^{\rm lb,\it q}_{\mathbf R}\left(\mathbf v_\ell\right) \triangleq 2{\rm Re}\left\lbrace \mathbf v_\ell^H\mathbf R\mathbf v_\ell^q\right\rbrace - \big(\mathbf v_\ell^q\big)^H\mathbf R\mathbf v_\ell^q$, $\mathbf R \in \left\lbrace\mathbf A_{j,k,\ell}, \mathbf B_{j,\mathrm E,\ell} \right\rbrace$, and $\mathcal G^{\rm lb, \it q}\left(\mathbf v_\ell, \lambda_{k,\ell}\right) \triangleq \frac{2{\rm Re}\left\lbrace\mathbf v_\ell^H\mathbf C_{k,k,\ell}\mathbf v_\ell^q\right\rbrace}{\tau_\ell\lambda_{k,\ell}^q} - \frac{\big(\mathbf v_\ell^q\big)^H\mathbf C_{k,k,\ell}\mathbf v_\ell^q}{\tau_\ell\left( \lambda_{k,\ell}^q\right)^2}\lambda_{k,\ell}$. Then, a locally optimal solution of problem \eqref{P1_relaxed_eqv2_sub2} can be obtained by iteratively solving the following convex QCQP via readily available solvers (e.g., CVX \cite{2004_S.Boyd_cvx}) until convergence is declared \cite{2010_Dinh_SCA_converge}.
\begin{align}\label{P1_relaxed_eqv2_sub2_sca}
\underset{\eta,\{\mathbf v_\ell\},\{\lambda_{k,\ell}\}}{\max} \hspace{3mm} \eta \hspace{8mm}
\text{s.t.} \hspace{3mm}  \eqref{P1_cons:h_relaxed}, \eqref{P1_relaxed_eqv2_sub2_cons:c}, \eqref{P1_cons:c_eqv3_eqv_sca}, \eqref{P1_relaxed_eqv2_sub2_cons:d_eqv_sca}. 
\end{align}

\begin{algorithm}[!t]  
	\caption{Proposed algorithm for (P1)}  \label{Alg2}  
	\begin{algorithmic}[1]
		\STATE Initialize $\tilde{\mathcal Z}^0$, $\{\mathbf v_\ell^0\}$, $\rho > 0$, and $c_2 > 1$.  
		\REPEAT
		\STATE Set $i = 0$. 
		\REPEAT 
		\STATE Solve problem \eqref{P1_relaxed_eqv2_sub1} via Algorithm \ref{Alg1} for given $\tilde{\mathcal Z}^i$ and $\{\mathbf v_\ell^i\}\}$, and denote the obtained locally optimal solution as $\tilde{\mathcal Z}^{i+1}$. \label{step_BCD_1}
		\STATE Solve problem \eqref{P1_relaxed_eqv2_sub2} via SCA for given $\tilde{\mathcal Z}^{i+1}$ and $\{\mathbf v_\ell^i\}\}$, and denote the obtained locally optimal solution as $\{\mathbf v_\ell^{i+1}\}$. \label{step_BCD_2}
		\STATE $i \leftarrow i + 1$.
		\UNTIL The fractional increase of the objective value of problem \eqref{P1_relaxed_eqv2} is smaller than a threshold $\epsilon_2 > 0$. 
		\STATE $\tilde{\mathcal Z}^0 \leftarrow \tilde{\mathcal Z}^i$, $\{\mathbf v_\ell^0\} \leftarrow \{\mathbf v_\ell^i\}$, and $\rho\leftarrow c_2\rho$.
		\UNTIL $h\big(\{a_{k,\ell}^i\}\big)$ is below a threshold $\varsigma_2 > 0$. 
		\STATE Set $\left[ \hat{\mathbf v}_\ell\right]_n = \left[\mathbf v_\ell^i\right]_n/\left|\left[ \mathbf v_\ell^i\right]_n\right|$, set $\hat{\mathbf W}_{\mathrm E,\ell} = \tilde {\mathbf S}_{\mathrm E,\ell}^i/\tau_\ell^i$ if $\tau_\ell^i > 0$ and $\hat{\mathbf W}_{\mathrm E,\ell} = \mathbf 0$ otherwise, set $\hat{\mathbf W}_{k,\ell} = \tilde {\mathbf S}_{k,\ell}^i/\tau_\ell^i$ if $\tau_\ell^i > 0$ and $\hat{\mathbf W}_{k,\ell} = \mathbf 0$ otherwise, and decompose $\hat{\mathbf W}_{k,\ell}$ as $\hat{\mathbf W}_{k,\ell} = \hat{\mathbf w}_{k,\ell}\hat{\mathbf w}_{k,\ell}^H$ via the Cholesky
		decomposition, $\forall \ell\in\mathcal L, n\in\mathcal N, k\in\mathcal K$. \label{step_constr_1}
		\STATE Compute $\hat\eta$ based on $\hat{\mathcal Z} \triangleq \left\lbrace \{ \hat{\mathbf w}_{k,\ell}\}, \{\hat{\mathbf W}_{\mathrm E,\ell}\},\{ a_{k,\ell}^i\}, \{\tau_\ell^i\},\{\hat{\mathbf v}_\ell\}\right\rbrace$, and output $\left\lbrace\hat\eta, \hat{\mathcal Z}\right\rbrace $ as a suboptimal solution of problem (P1). \label{step_constr_2} 
	\end{algorithmic} 
\end{algorithm}

\subsection{Overall Algorithm}
Based on the above results, we summarize the details of our proposed algorithm for (P1) in Algorithm \ref{Alg2}. For any given $\rho$, the BCD inner loop of Algorithm \ref{Alg2} solves problem \eqref{P1_relaxed_eqv2} by alternately solving problems \eqref{P1_relaxed_eqv2_sub1} and \eqref{P1_relaxed_eqv2_sub2} and is guaranteed to converge to a stationary point of problem \eqref{P1_relaxed_eqv2} \cite{2001_Tseng_BCD}. In the outer loop, we gradually increase $\rho$ to a sufficiently large value via $\rho \leftarrow c_2\rho$ to make $h\big(\{a_{k,\ell}\}\big) \rightarrow 0$, thereby ensuring $a_{k,\ell} \in \{0,1\}$, $\forall k\in\mathcal K, \ell\in\mathcal L$. As a consequence, after the convergence of the outer loop, we can obtain a stationary solution of problem \eqref{P1_relaxed_eqv2} satisfying the binary constraints on $\{a_{k,\ell}\}$. Since the obtained $\{\mathbf v_\ell^i\}$ may not satisfy the unit-modulus constraints of (P1), we set $\left[ \hat{\mathbf v}_\ell\right]_n = \left[\mathbf v_\ell^i\right]_n/\left|\left[ \mathbf v_\ell^i\right]_n\right|$, $\forall \ell\in\mathcal L, n\in\mathcal N$, without violating any other constraints of (P1). Then, by performing the remaining operations in steps \ref{step_constr_1} and \ref{step_constr_2} of Algorithm \ref{Alg2}, we can obtain a suboptimal solution of (P1). 

The computational complexity of Algorithm \ref{Alg2} is analyzed as follows. In each inner loop iteration, the main complexity lies in steps \ref{step_BCD_1} and \ref{step_BCD_2}. The computational cost of step \ref{step_BCD_1} for solving problem \eqref{P1_relaxed_eqv2_sub1} via Algorithm \ref{Alg1} is $\mathcal O\left(I_{\rm out}^1I_{\rm inn}^1\sqrt{M}\log_2\left({1}/{\varepsilon}\right)\big(\omega M^3 + \omega^2 M^2 + \omega^3\big)\right)$ \cite{2010_Imre_SDR_complexity}, where $I_{\rm inn}^1$ and $I_{\rm out}^1$ denote the numbers of inner and outer iterations required for the convergence of Algorithm \ref{Alg1}, respectively, $\varepsilon$ is the solution accuracy, and $\omega \triangleq 3KL + K + L + J$. The complexity of step \ref{step_BCD_2} for solving problem \eqref{P1_relaxed_eqv2_sub2} via SCA is $\mathcal O\Big(I_{\rm s}\sqrt{JN + JL + KL}\log_2\left({1}/{\varepsilon}\right)\big(JKL^4N^3 + JL^4N^4 + K^2L^4N^2 + K^3L^3\big)\Big)$ \cite{2014_K.wang_complexity}, where $I_{\rm s}$ stands for the number of iterations required for the convergence of SCA. Therefore, the overall complexity of Algorithm \ref{Alg2} is about $\mathcal O\Big[I_{\rm out}^2I_{\rm inn}^2\log_2\left({1}/{\varepsilon}\right)\Big(I_{\rm out}^1I_{\rm inn}^1\sqrt{M}\big(\omega M^3 + \omega^2 M^2 + \omega^3\big) + I_{\rm s}\sqrt{JN + JL + KL}\big(JKL^4N^3 + JL^4N^4 + K^2L^4N^2 + K^3L^3\big)\Big)\Big]$, with $I_{\rm inn}^2$ and $I_{\rm out}^2$ denoting the numbers of inner and outer iterations required for the convergence of Algorithm \ref{Alg2}, respectively. 

\section{Proposed Algorithm for (P2)}\label{Sec:P2_solution}
We note that (P2) differs from (P1) in the sense that it does not have constraints on $\sum_{\ell \in\mathcal L} a_{k,\ell}$, $\forall k\in\mathcal K$, as in constraint \eqref{P1_cons:g} of (P1), which enables us to simplify (P2) by removing the binary variables $\{a_{k,\ell}\}$. In other words, we have the following theorem. 
\begin{theo}\label{theo2}
	\vspace{-3mm}
	Problem (P2) shares the same optimal value with its simplified version, denoted by (P2') which is obtained by removing $\{a_{k,\ell}\}$ in (P2).
\end{theo}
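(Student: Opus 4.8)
The plan is to prove the equality of optimal values by establishing two opposite inequalities, exploiting the observation that in (P2) the binary variables $\{a_{k,\ell}\}$ enter the constraints \eqref{P1_cons:b}--\eqref{P1_cons:d} only as multipliers of the precoders $\{\mathbf w_{k,\ell}\}$, and are subject to no restriction other than the binary requirement \eqref{P1_cons:f}. Crucially, unlike (P1), problem (P2) contains no bound on $\sum_{\ell\in\mathcal L}a_{k,\ell}$ of the form \eqref{P1_cons:g}.

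First I would show that the optimal value of (P2$'$) is no smaller than that of (P2). Given any feasible point of (P2), define new precoders $\mathbf w_{k,\ell}' \triangleq a_{k,\ell}\mathbf w_{k,\ell}$, $\forall k\in\mathcal K,\ell\in\mathcal L$. Because each $a_{k,\ell}\in\{0,1\}$ satisfies $a_{k,\ell}^2 = a_{k,\ell}$, every expression carrying the factor $a_{k,\ell}$ collapses onto the corresponding form in $\mathbf w_{k,\ell}'$: for instance $a_{k,\ell}\mathbf w_{k,\ell}^H\mathbf X_{k,\ell}\mathbf w_{k,\ell} = (\mathbf w_{k,\ell}')^H\mathbf X_{k,\ell}\mathbf w_{k,\ell}'$, and likewise $a_{k,\ell}\mathbf w_{k,\ell}^H\mathbf Y_{j,\ell}\mathbf w_{k,\ell} = (\mathbf w_{k,\ell}')^H\mathbf Y_{j,\ell}\mathbf w_{k,\ell}'$ and $a_{k,\ell}\left\|\mathbf w_{k,\ell}\right\|^2 = \left\|\mathbf w_{k,\ell}'\right\|^2$. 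Hence $\hat\gamma_{k,\ell}$, the energy term in \eqref{P1_cons:c}, and the power term in \eqref{P1_cons:d} all take exactly the values of their $a$-free counterparts in (P2$'$) evaluated at $\{\mathbf w_{k,\ell}'\}$. The tuple $(\eta,\{\mathbf w_{k,\ell}'\},\{\mathbf W_{\mathrm E,\ell}\},\{\tau_\ell\},\{\mathbf v_\ell\})$ is therefore feasible for (P2$'$) with the same objective value $\eta$.

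Conversely, I would show that the optimal value of (P2) is no smaller than that of (P2$'$). Given any feasible point of (P2$'$), set $a_{k,\ell}=1$ for all $k\in\mathcal K,\ell\in\mathcal L$ and keep $\{\mathbf w_{k,\ell}\}$, $\{\mathbf W_{\mathrm E,\ell}\}$, $\{\tau_\ell\}$, and $\{\mathbf v_\ell\}$ unchanged. Since (P2) imposes no constraint on $\sum_{\ell\in\mathcal L}a_{k,\ell}$, this all-ones assignment trivially satisfies \eqref{P1_cons:f}, and with every $a_{k,\ell}=1$ the constraints \eqref{P1_cons:b}--\eqref{P1_cons:d} reduce precisely to the corresponding constraints of (P2$'$). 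The constructed tuple is thus feasible for (P2) with the identical objective. Combining the two inequalities establishes that (P2) and (P2$'$) attain the same optimal value.

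The point requiring the most care---and the reason the argument succeeds for (P2) but not for (P1)---is the absence of the overlap-limiting constraint \eqref{P1_cons:g}. The backward direction hinges on being free to set all $a_{k,\ell}=1$ without violating any group-membership bound; under \eqref{P1_cons:g} this would in general be infeasible, so the binary variables cannot be eliminated by the same absorption into $\{\mathbf w_{k,\ell}\}$. The only other delicate step is confirming that the identity $a_{k,\ell}^2=a_{k,\ell}$ is exactly what makes the quadratic forms and the squared-norm power terms coincide under the substitution $\mathbf w_{k,\ell}'=a_{k,\ell}\mathbf w_{k,\ell}$; this is routine, but it is where the binary nature of $\{a_{k,\ell}\}$ is genuinely invoked.
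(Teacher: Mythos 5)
Your proof is correct and takes essentially the same approach as the paper: the paper shows $\bar\eta \geq \grave\eta$ by observing that (P2$'$) is the special case of (P2) with all $a_{k,\ell}=1$ (your backward direction, valid precisely because \eqref{P1_cons:g} is absent), and shows $\bar\eta \leq \grave\eta$ by setting $\breve{\mathbf w}_{k,\ell}=\bar{\mathbf w}_{k,\ell}$ if $\bar a_{k,\ell}=1$ and $\breve{\mathbf w}_{k,\ell}=\mathbf 0$ otherwise, which is exactly your multiplicative absorption $\mathbf w_{k,\ell}'=a_{k,\ell}\mathbf w_{k,\ell}$. Your explicit check via $a_{k,\ell}^2=a_{k,\ell}$ simply spells out the step the paper labels ``easy to verify.''
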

\begin{proof}
	\vspace{-3mm}
	Denote by $\bar{\eta}$ and $\grave{\eta}$ the optimal values of (P2) and (P2'), respectively. First, we have $\bar{\eta} \geq \grave{\eta}$ since (P2') is actually a special case of (P2) with $a_{k,\ell} = 1$, $\forall k \in\mathcal K, \ell\in\mathcal L$. Next, denote $\left\lbrace \bar\eta, \{\bar{\mathbf w}_{k,\ell}\}, \{\bar{\mathbf W}_{\mathrm E,\ell}\},\{ \bar a_{k,\ell}\}, \{\bar{\tau}_\ell\},\{\bar{\mathbf v}_\ell\}\right\rbrace$ as an arbitrary optimal solution to (P2). Let $\breve{\mathbf w}_{k,\ell} = \bar{\mathbf w}_{k,\ell}$ if $\bar a_{k,\ell} = 1$ and $\breve{\mathbf w}_{k,\ell} = \mathbf 0$ otherwise, $\forall k\in\mathcal K, \ell\in\mathcal L$. It is easy to verify that $\left\lbrace\bar\eta, \{\breve{\mathbf w}_{k,\ell}\}, \{\bar{\mathbf W}_{\mathrm E,\ell}\}, \{\bar{\tau}_\ell\},\{\bar{\mathbf v}_\ell\}\right\rbrace$ is a feasible solution to (P2'). Then, it follows that $\bar\eta \leq \grave{\eta}$. This, together with $\bar\eta \geq \grave{\eta}$, yields $\bar\eta = \grave{\eta}$. Theorem \ref{theo2} is thus proved. 
	\vspace{-2mm} 
\end{proof}
Based on Theorem \ref{theo2}, we only need to focus on solving (P2'). Since (P2') is similar to but much simpler than (P1), Algorithm \ref{Alg2} for (P1) can be modified to solve (P2'). Furthermore, the computational complexity of solving (P2') is much lower than that of solving (P1) since (P2') does not involve binary variables $\{a_{k,\ell}\}$. The details are omitted due to the space limitation. Denote by $\acute{\mathcal Z} \triangleq \left\lbrace\acute\eta, \{\acute{\mathbf w}_{k,\ell}\}, \{\acute{\mathbf W}_{\mathrm E,\ell}\}, \{\acute{\tau}_\ell\},\{\acute{\mathbf v}_\ell\}\right\rbrace$ the obtained solution of (P2'). Let $\acute a_{k,\ell} = 1$ if $\acute{\mathbf w}_{k,\ell} \neq 0$ and $\acute a_{k,\ell} = 0$ otherwise, $\forall k\in\mathcal K, \ell\in\mathcal L$. By doing so, we obtain a suboptimal solution $\left\lbrace \acute{\mathcal Z}, \left\lbrace \acute a_{k,\ell} \right\rbrace \right\rbrace$ of (P2).   

\begin{figure}[!t]
	\centering
	\includegraphics[width=0.66\textwidth]{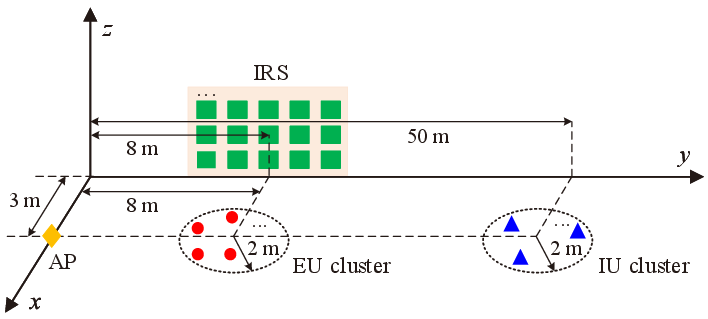}
	\caption{Simulation setup. The AP, IRS elements, EUs, and IUs are marked by orange `$\blacklozenge$', green '$\blacksquare$'s, red '$\bullet$'s, and blue '$\blacktriangle$'s, respectively.} \label{Fig:simulation_setup}
	\vspace{-2mm}
\end{figure}

\section{Simulation Results}\label{Sec:simulation}
In this section, simulations are presented to evaluate the performance of our proposed UG schemes. As illustrated in Fig. \ref{Fig:simulation_setup}, we consider a three-dimensional (3D) coordinate setup with the locations of the AP and the IRS being $\left(3, 0, 0\right)$ and $\left(0, 8, 0 \right)$ measured in meter (m), respectively. The EUs and the IUs are randomly and uniformly distributed in two different circular regions centered at $\left(3, 8, 0\right)$ m and $\left(3, 50, 0\right)$ m, respectively, with identical radii of $2$ m. Each channel response is assumed to comprise two types of radio fading: large-scale and small-scale. The large-scale fading is modeled as ${\rm PL}(d) = C_0/d^{\alpha}$ \cite{2019_Qingqing_Joint}, where $C_0$, $d$, and $\alpha$ denote the path loss at the reference distance of $1$ m, the link distance, and the path loss exponent, respectively. We set $C_0 = -30$ dB for all the links, $\alpha = 3.5$ for the direct links, and $\alpha = 2.2$ for the IRS-related links, respectively. Furthermore, the small-scale fading is characterized by Rayleigh fading for the direct links while Rician fading for the IRS-related links with a Rician factor of $3$ dB. Unless otherwise stated, other parameters are set as $\sigma_k^2 = -80$ dBm, $\forall k\in \mathcal K$, $M = 4$, $N = 40$, $P = 43$ dBm, $T = 1$ s, $\mu = \rho = 10^{-2}$, $c_1 = c_2 = 10$, $\epsilon_1 = \epsilon_2 = 10^{-4}$, and $\varsigma_1 = \varsigma_2 = 10^{-7}$. 

\begin{figure}[!t]
	\centering
	\includegraphics[scale=0.71]{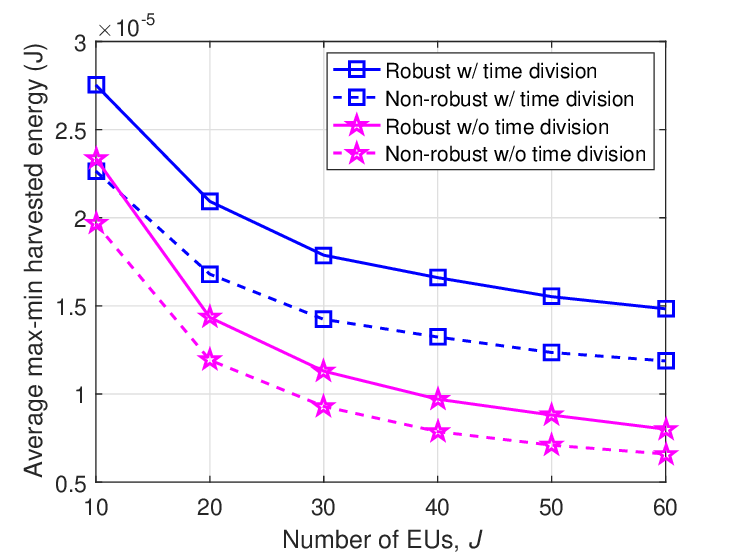}\vspace{-1.5mm}
	\caption{Average max-min harvested energy versus the number of EUs for $L = 5$.}
	\label{fig:E_vs_EUs}
	\vspace{-3mm}
\end{figure}

\vspace{-1mm}
\subsection{Achievable Max-Min Harvested Energy}
We first provide a numerical comparison of the max-min harvested energy achievable by the following schemes: 1) \textbf{Robust w/ time division}: the algorithm proposed in Section \ref{Sec:feasi_check} for problem \eqref{prob:feasi_E}; 2) \textbf{Non-robust w/ time division}: we solve a problem similar to problem \eqref{prob:feasi_E} but without considering the phase errors, after which we apply the obtained solution to compute the actual achievable max-min harvested energy in the presence of the phase errors; 3) \textbf{Robust w/o time division}: the counterpart of the scheme in 1) without time division (i.e., with time-invariant transmit/reflect beamforming); 4) \textbf{Non-robust w/o time division}: the counterpart of the scheme in 2) without time division. 

In Fig. \ref{fig:E_vs_EUs}, we plot the achievable max-min harvested energy of the above schemes versus the number of EUs for $L = 5$. Firstly, it is observed that with increasing $J$, all the schemes experience a striking decrease in the max-min harvested energy. This is intuitive since the more the number of EUs, the more difficult it is to balance the energy fairness among different EUs. Secondly, we note that the time division-based schemes perform much better than their counterparts without time division. The reason is that in the considered overloaded system, the time division-based schemes allow the AP (IRS) to steer the energy (reflected) signals towards different EUs in different time slots, which improves the minimum harvested energy of more EUs (especially those with weak channel conditions). Lastly, it is expected that for both cases with and without time division, the non-robust design suffers a substantial performance loss compared to the robust one since the former does not considering the phase errors when designing the transmit/reflect beamforming and time allocation (if any). 
Nevertheless, ignoring the phase errors brings a more significant performance degradation to the time division-based scheme than to that without time division. This is because the phase errors have a greater negative impact on the former scheme adopting time-varying IRS beamforming than the latter one with time-invariant IRS beamforming. The above two observations demonstrate the importance of robust design for IRS-aided SWIPT systems with EH requirements and phase errors since a non-robust design can finally lead to an infeasible EH solution. 


\vspace{-1mm}
\subsection{Achievable Max-Min Throughput}
This subsection compares the achievable max-min throughputs of our proposed non-overlapping and overlapping UG schemes with those of the following two benchmark schemes: 1) \textbf{Random UG}: $a_{k,\ell}$ is non-optimized and randomly selected from $\{0,1\}$, $\forall k\in\mathcal K$, $\ell\in\mathcal L$;  
2) \textbf{Without UG}: the conventional IRS-aided SWIPT strategy as in \cite{2020_Qingqing_SWIPT_letter,2020_Qingqing_SWIPT_QoS,2020_Cunhua_SWIPT,2020_Wei_SWIPT_secure,2021_Shayan_SWIPT}, with the number of available time slots being $1$ (i.e., $\tau_1 = T$) and $a_{k,1} = 1$, $\forall k\in\mathcal K$. If any scheme is judged infeasible under certain setups, we assign a value of zero to its achievable max-min throughput as a means of factoring in the associated penalty. In addition, since the non-robust counterparts of the considered schemes almost always result in infeasible EH solutions, their simulation curves are omitted. \looseness=-1

\begin{figure}[!t]
	\vspace{-3mm}
	\hspace{-4.5mm}
	\subfigure[$E = 1\times 10^{-5}$ J.]{\label{fig:R_vs_IUs_E1}
		\includegraphics[scale=0.71]{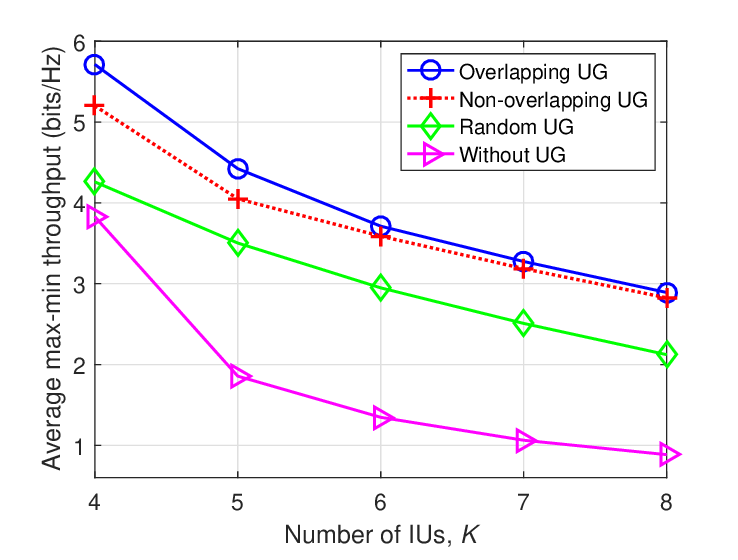}}
	\hspace{-8.5mm}
	\subfigure[$E = 2\times 10^{-5}$ J.]{\label{fig:R_vs_IUs_E2}
		\includegraphics[scale=0.71]{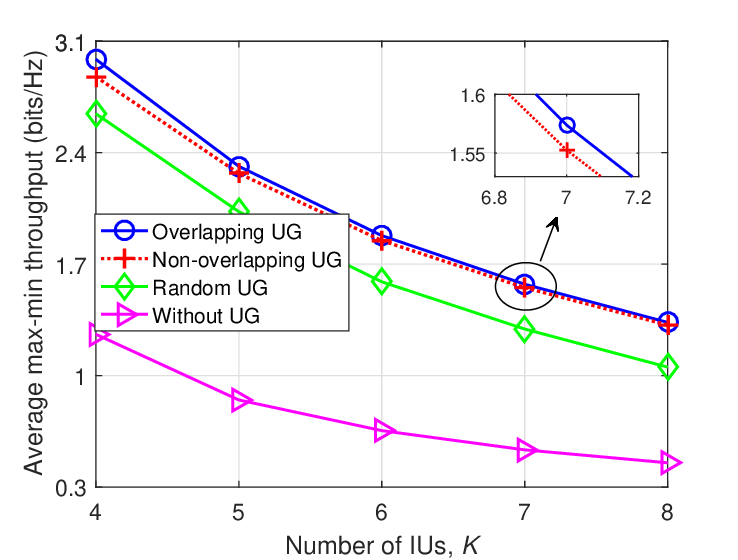}}\vspace{-1.4mm}
	\caption{Average max-min throughput versus the number of IUs for $J = 8$ and $L = 3$.}
	\label{fig:R_vs_IUs}
	\vspace{-4mm}
\end{figure}

\subsubsection{Impact of Number of IUs}
Fig. \ref{fig:R_vs_IUs} depicts the average max-min throughput versus the number of IUs when $E = 1 \times 10^{-5}$ and $2 \times 10^{-5}$ J, respectively. Here, we set $J = 8$ and $L = 3$. From Fig. \ref{fig:R_vs_IUs_E1}, it is first observed that the schemes adopting overlapping, non-overlapping, or random UG exhibit overwhelming superiority over that without UG, with the performance improvement in percentage increasing as $K$ increases. The reasons are twofold. For one thing, since the three UG-based schemes with time division make it easier to fulfill the EH constraints at the EUs (see Fig. \ref{fig:E_vs_EUs}), more degrees-of-freedom (DoF) are left for enhancing the performance of the IUs, as compared to the scheme without UG (and time division). For another, under the setting of $K + J > M$ and $K \geq M$, grouping the IUs can alleviate the inter-user interference more effectively than not grouping them, especially when $K$ is large. Second, the scheme with random UG performs not so well as those with optimized UG, which shows the importance of well-optimized UG for performance enhancement. Third, the overlapping UG scheme consistently outperforms its sub-scheme, i.e., the non-overlapping UG scheme, as the former enables more efficient utilization of all the available resources. However, it is noteworthy that as $K$ increases, the performance improvement of the overlapping UG scheme over the non-overlapping UG scheme becomes less pronounced. The explanation is that since increasing $K$ must lead to more severe inter-user interference, allowing some IUs to participate in multiple groups may no longer be a better choice or can only bring little throughput gain. 

From Fig. \ref{fig:R_vs_IUs_E2}, besides the
observations similar to those in Fig. \ref{fig:R_vs_IUs_E1}, we observe that the performance gap between the overlapping and non-overlapping UG schemes is marginal, even when $K$ is relatively small. This can be explained as follows. With $E = 2 \times 10^{-5}$ J, few resources are available for the IUs since the EUs with stringent EH requirements occupy most of them, which makes the overlapping UG scheme hardly bring its advantage in more efficient resource utilization for throughput improvement into play. 

\begin{figure}[!t]
	\hspace{-4.5mm}
	\subfigure[]{\label{fig:active_slots_vs_IUs}
		\includegraphics[scale=0.71]{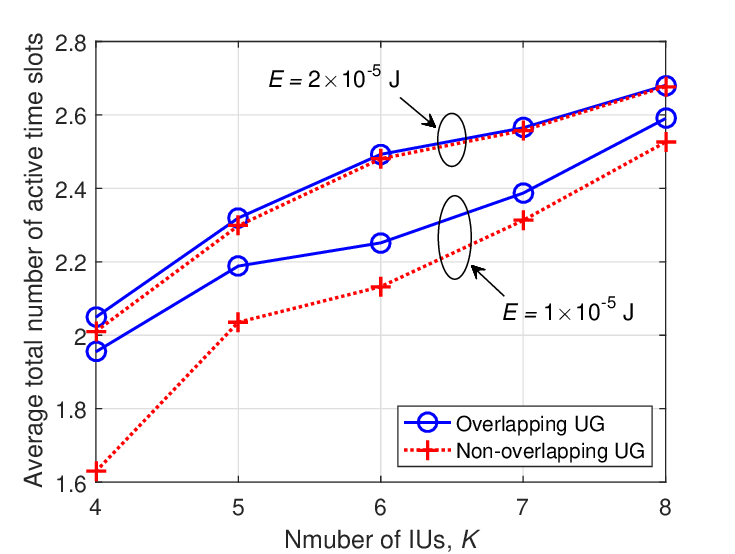}}
	\hspace{-8.5mm}
	\subfigure[]{\label{fig:sum_a_vs_IUs}
		\includegraphics[scale=0.71]{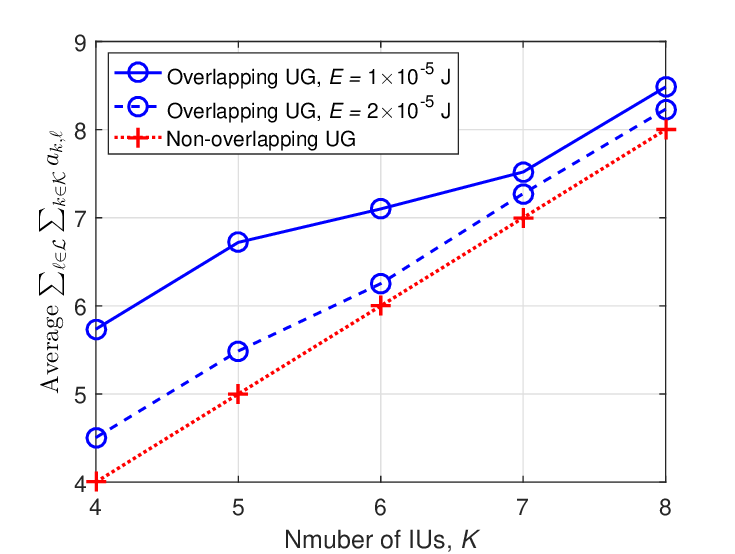}}\vspace{-1.4mm}
	\caption{(a) Average total number of active time slots and (b) average $\sum_{\ell\in\mathcal L}\sum_{k\in\mathcal K}a_{k,\ell}$ versus the number of IUs for $J = 8$ and $L = 3$.}
	\label{fig:resource_vs_IUs}
	\vspace{-2mm}
\end{figure} 

To gain more insights, we plot the corresponding average total number of active time slots and $\sum_{\ell\in\mathcal L}\sum_{k \in\mathcal K}a_{k,\ell}$ versus the number of IUs in Figs. \ref{fig:active_slots_vs_IUs} and \ref{fig:sum_a_vs_IUs}, respectively. Here, the active time slots refer to those with positive time durations, and the number of them also implies the number of IU groups. Fig. \ref{fig:active_slots_vs_IUs} shows that the average number of active time slots (as well as the average number of IU groups) increases with $K$, since the number of IUs that the AP can serve well in one time slot is limited. Moreover, it is worth pointing out that the number of IU groups is not the more the better, since the transmission duration allocated to each group is inversely proportional to the number of IU groups. This may explain why not all the available time slots are active. We also note that more active time slots are required for both the UG schemes when $E = 2 \times 10^{-5}$ J than when $E = 1 \times 10^{-5}$ J. Besides, it can be seen from Fig. \ref{fig:sum_a_vs_IUs} that for the overlapping UG scheme, the average $\sum_{\ell\in\mathcal L}\sum_{k \in\mathcal K}a_{k,\ell}$ is always larger than its corresponding $K$. This confirms that in certain channel realizations, some IUs do participate in more than one IU group. \looseness=-1

\begin{figure}[!t]
	\begin{minipage}[t]{0.5\linewidth}
		\hspace{-3mm}
		\includegraphics[scale=0.71]{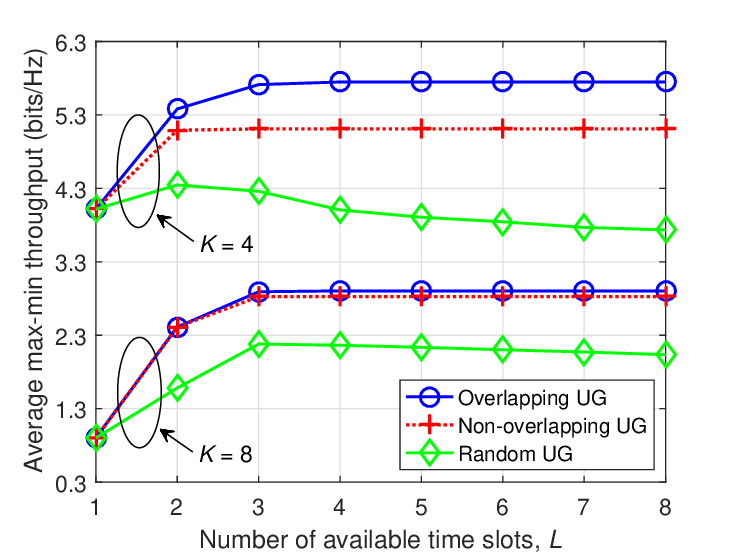}\vspace{-1.5mm}
		\caption{Average max-min throughput versus the number of\\ time slots for $J = 8$ and $E = 1\times10^{-5}$ J.}
		\label{fig:R_vs_L}
	\end{minipage}%
	\begin{minipage}[t]{0.5\linewidth}
		\centering
		\includegraphics[scale=0.71]{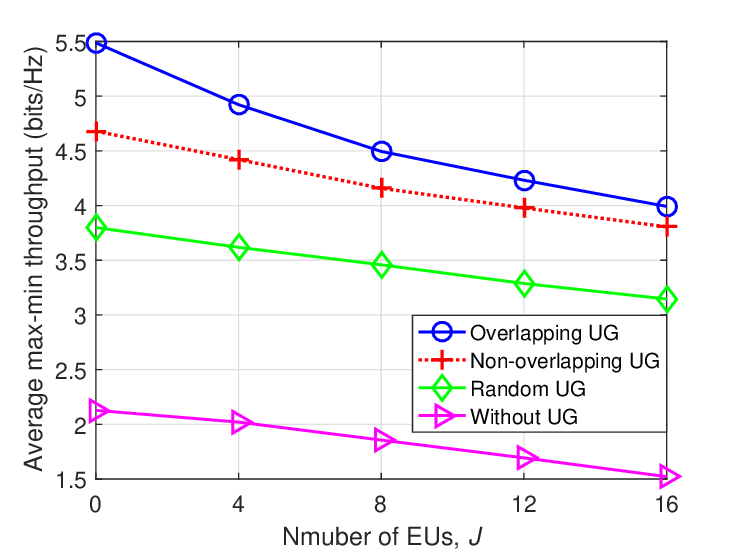}\vspace{-1.5mm}
		\caption{Average max-min throughput versus the number of EUs for $K = 5$, $L = 3$, and $E = 1\times10^{-5}$ J.}
		\label{fig:R_vs_EUs}
	\end{minipage}
	\vspace{-3mm}
\end{figure}

\subsubsection{Impact of Number of Available Time Slots}

In Fig. \ref{fig:R_vs_L}, we investigate the impact of the number of available time slots on the system performance for $J = 8$ and $E = 1 \times 10^{-5}$ J. It is observed that with increasing $L$, the max-min throughputs achieved by the overlapping and non-overlapping UG schemes first grow monotonically and then become gradually saturated. The reasons for this result are as follows. When $L$ is small, the increase in $L$ allows the formation of more IU groups, each with fewer IUs, enabling more spatial multiplexing gains to be achieved in each corresponding time slot. On the other hand, when $L$ is large enough, further increasing $L$ would no longer lead to an increased number of IU groups. This is because dividing the IUs into many more groups but each with a shorter transmission duration can be unfavorable for max-min throughput performance, which is confirmed by the trends of the curves representing the random UG scheme. 

\subsubsection{Impact of Number of EUs}
Fig. \ref{fig:R_vs_EUs} illustrates the average max-min throughput versus the number of EUs when $K = 5$, $L = 3$, and $E = 1\times10^{-5}$ J. As can be seen, the max-min throughputs achieved by all the schemes decrease rapidly with the increase of $J$. This is expected since the number of EH constraints increases with $J$, which narrows the feasible regions of the considered problems corresponding to these schemes. Additionally, in the absence of EUs (i.e., $J = 0$), the three schemes with UG still significantly outperform that without UG, thus further verifying the usefulness of grouping the IUs for max-min throughput improvement. 
Finally, the performance gap between the overlapping and non-overlapping UG schemes decreases as $J$ increases, which is consistent with the observation in Fig. \ref{fig:R_vs_IUs} that the increase in $E$ diminishes the advantage of the overlapping UG scheme over its non-overlapping counterpart.   

\begin{figure}[!t]
	\hspace{-4.5mm}
	\subfigure[$E = 1\times 10^{-5}$ J.]{\label{fig:R_vs_N_E1}
		\includegraphics[scale=0.71]{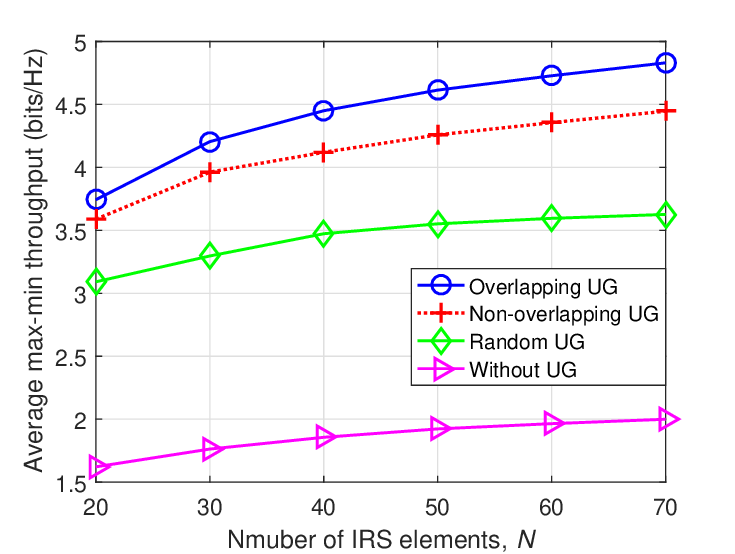}}
	\hspace{-8.5mm}
	\subfigure[$E = 2\times 10^{-5}$ J.]{\label{fig:R_vs_N_E2}
		\includegraphics[scale=0.71]{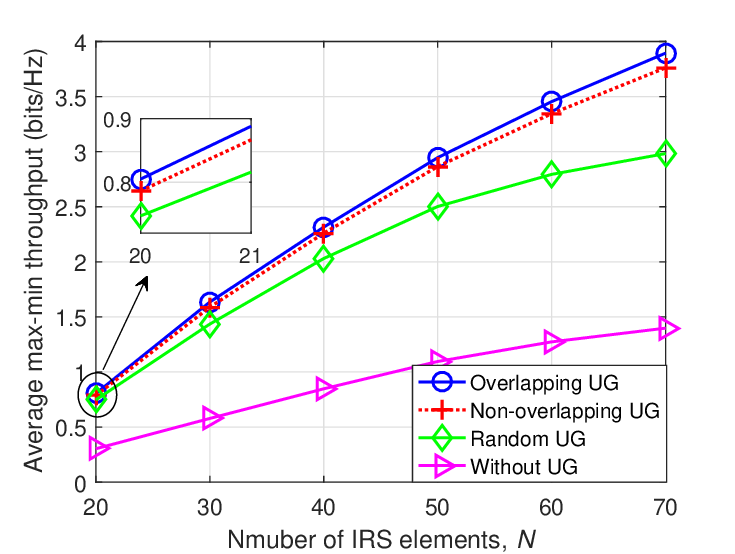}}\vspace{-1mm}
	\caption{Average max-min throughput versus the number of IUs for $K = 5$, $J = 8$ and $L = 3$.}
	\label{fig:R_vs_N}
	\vspace{-3mm}
\end{figure}

\subsubsection{Impact of Number of IRS Elements}
In Fig. \ref{fig:R_vs_N}, we plot the average max-min throughput versus the number of IRS elements when $E = 1 \times 10^{-5}$ and $2 \times 10^{-5}$ J, respectively. It is observed that the max-min throughputs achieved by all the schemes show upward trends as $N$ becomes larger, since more DoF are available for customizing more favorable channels. Nevertheless, the performance gains diminish with $N$, which is especially evident for the relatively smaller $E$. We explain this result based on the following two facts. First, increasing $N$ makes the EH requirement gradually less of a limiting factor to the performance. Second, the achievable max-min throughput of each scheme is upper-bounded by a finite value due to the AP's limited transmit power and transmission duration. Besides, we note that with the increase of $N$, the performance gap between the overlapping UG scheme and the other three schemes becomes more pronounced, since the former can better utilize the increased DoF. 
\section{Conclusion}\label{Sec:conclusion}
This paper considered an overloaded multiuser MISO downlink SWIPT system assisted by an IRS with phase errors. We grouped the IUs by considering two UG schemes, i.e., the non-overlapping UG scheme and the overlapping UG scheme, which do not allow and allow each IU to participate in multiple groups, respectively. Aiming to maximize the minimum throughput among all the IUs while satisfying the EH requirement of each EU, we formulated two design problems, each corresponding to one UG scheme, where the UG variables, the time allocation, and the transmit/reflect beamforming were jointly optimized. Computationally efficient algorithms were proposed to solve these two mixed-integer non-convex optimization problems suboptimally. Simulation results demonstrated that robust design is vital to practical IRS-aided SWIPT systems with phase errors since the solution obtained when ignoring the phase errors generally fails to satisfy the EH constraints.  
Moreover, our proposed UG schemes can remarkably improve the max-min throughput performance compared to the case without UG, as they enable higher active and passive beamforming gains by serving fewer IUs concurrently. 
Finally, unless the absolute difference between the number of transmit antennas at the AP and the number of IUs is small and the EH constraints are loose, the max-min throughput achieved by the non-overlapping UG scheme is comparable to that by the overlapping UG scheme. Thus, in most scenarios, the non-overlapping UG scheme is more attractive to practice systems due to its comparable performance to the overlapping UG scheme and extra advantage of easier implementation. 

\appendices
\section{Proof of Theorem \ref{theo}} \label{Appen_A}
According to \eqref{equ:SINR} and \eqref{equ:E}, $\mathbb E_{\tilde{\mathbf v}_\ell}\{ \gamma_{k,\ell}\}$ and $\mathbb E_{\tilde{\mathbf v}_\ell}\{ Q_j\}$ can be written in the following forms: 
{\addtolength{\jot}{5pt}
	\begin{align}
	\mathbb E_{\tilde{\mathbf v}_\ell}\{ \gamma_{k,\ell}\} & = \frac{a_{k,\ell}\mathbf w_{k,\ell}^H\mathbf H_k^H\mathcal P_\ell\mathbf H_k\mathbf w_{k,\ell}}{\sum_{i\in \mathcal K\backslash\{k\}}a_{i,\ell}\mathbf w_{i,\ell}^H\mathbf H_k^H\mathcal P_\ell\mathbf H_k\mathbf w_{i,\ell} + {\rm tr}\left(\mathbf H_k^H\mathcal P_\ell\mathbf H_k\mathbf W_{\mathrm E,\ell}\right)+ \sigma_k^2}, \  k\in\mathcal K, \ell\in\mathcal L, \label{expecta_SINR_orig}\\ 
	\mathbb E_{\tilde{\mathbf v}_\ell}\{ Q_j\} & = \sum_{\ell\in\mathcal L}\tau_\ell\left(\sum_{k\in\mathcal K}a_{k,\ell}\mathbf w_{k,\ell}^H\mathbf G_j^H\mathcal P_\ell\mathbf G_j\mathbf w_{k,\ell} + {\rm tr}\left(\mathbf G_j^H\mathcal P_\ell\mathbf G_j\mathbf W_{\mathrm E, \ell} \right) \right), \  j\in\mathcal J, \label{expecta_E_orig}
	\end{align}}%
where $\mathcal P_\ell = \mathbb E_{\tilde{\mathbf v}_\ell}\left\lbrace \left(\mathbf v_\ell\odot\tilde {\mathbf v}_\ell\right)\left(\mathbf v_\ell\odot\tilde {\mathbf v}_\ell\right)^H \right\rbrace$. Then, the problem of deriving the closed-form expressions of $\mathbb E_{\tilde{\mathbf v}_\ell}\{ \gamma_{k,\ell}\}$ and $\mathbb E_{\tilde{\mathbf v}_\ell}\{ Q_j\}$ is converted into that for $\mathcal P_\ell$. Notice that $\mathcal P_\ell$ can be recast as $\mathcal P_\ell = {\rm diag}\left(\mathbf v_\ell\right)\mathbb E_{\tilde{\mathbf v}_\ell}\left\lbrace\tilde{\mathbf v}_\ell\tilde{\mathbf v}_{\ell}^H\right\rbrace {\rm diag}\left(\mathbf v_{\ell}^H\right)$ with the expression of $\mathbb E_{\tilde{\mathbf v}_\ell}\left\lbrace\tilde{\mathbf v}_\ell\tilde{\mathbf v}_{\ell}^H\right\rbrace$ given by
\begin{align}\label{equ:matrix}
\begin{bmatrix}
1& \mathbb E_{\Delta\tilde{\theta}_{\ell,2,1}}\left\lbrace e^{\jmath\Delta\tilde{\theta}_{\ell,2,1}}\right\rbrace & \cdots & \mathbb E_{\Delta\tilde{\theta}_{\ell,N,1}}\left\lbrace e^{\jmath\Delta\tilde{\theta}_{\ell,N,1}}\right\rbrace  & \mathbb E_{\tilde\tilde{\theta}_{\ell,1}}\left\lbrace e^{-\jmath\tilde{\theta}_{\ell,1}}\right\rbrace \\
\mathbb E_{\Delta\tilde{\theta}_{\ell,1,2}}\left\lbrace e^{\jmath\Delta\tilde{\theta}_{\ell,1,2}}\right\rbrace & 1 & \cdots & \mathbb E_{\Delta\tilde{\theta}_{\ell,N,2}}\left\lbrace e^{\jmath\Delta\tilde{\theta}_{\ell,N,2}}\right\rbrace & \mathbb E_{\tilde\theta_{\ell,2}}\left\lbrace e^{-\jmath\tilde{\theta}_{\ell,2}}\right\rbrace \\
\vdots & \vdots & \ddots & \vdots & \vdots \\
\mathbb E_{\Delta\tilde{\theta}_{\ell,1,N}}\left\lbrace e^{\jmath\Delta\tilde{\theta}_{\ell,1,N}}\right\rbrace & \mathbb E_{\Delta\tilde{\theta}_{\ell,2,N}}\left\lbrace e^{\jmath\Delta\tilde{\theta}_{\ell,2,N}}\right\rbrace & \cdots &  1 & \mathbb E_{\tilde\theta_{\ell,N}}\left\lbrace e^{-\jmath\tilde{\theta}_{\ell,N}}\right\rbrace \\
\mathbb E_{\tilde\theta_{\ell,1}}\left\lbrace e^{\jmath\tilde{\theta}_{\ell,1}}\right\rbrace  & \mathbb E_{\tilde\theta_{\ell,2}}\left\lbrace e^{\jmath\tilde{\theta}_{\ell,2}}\right\rbrace  & \cdots & \mathbb E_{\tilde\theta_{\ell,N}}\left\lbrace e^{\jmath\tilde{\theta}_{\ell,N}}\right\rbrace & 1 \\
\end{bmatrix}. 
\end{align}
In \eqref{equ:matrix}, $\Delta\tilde{\theta}_{\ell,m,n} \triangleq \tilde{\theta}_{\ell,m} - \tilde{\theta}_{\ell,n}$, $m,n\in\mathcal N$, $m\neq n$, $\ell\in\mathcal L$. Since $\tilde{\theta}_{\ell,m}$ and $\tilde{\theta}_{\ell,n}$ are uniformly distributed on $\left[-{\pi}/{2}, {\pi}/{2}\right]$, $\Delta\tilde{\theta}_{\ell,m,n}$ follows a triangular distribution on $[-\pi,\pi]$ and its probability density function can be expressed as
\begin{align}\label{PDF}
f\left(\Delta\tilde{\theta}_{\ell,m,n}\right) = \begin{cases}
\frac{\Delta\tilde{\theta}_{\ell,m,n}}{\pi^2} + \frac{1}{\pi}, & \Delta\tilde{\theta}_{\ell,m,n} \in [-\pi,0],\\
-\frac{\Delta\tilde{\theta}_{\ell,m,n}}{\pi^2} + \frac{1}{\pi}, & \Delta\tilde{\theta}_{\ell,m,n} \in (0,\pi], \\
0, & \text{otherwise}. 
\end{cases}
\end{align}
With \eqref{PDF}, we have 
\begin{align}\label{expectation_1}
E_{\Delta\tilde{\theta}_{\ell,m,n}}\left\lbrace e^{\jmath\Delta\tilde{\theta}_{\ell,m,n}}\right\rbrace = & \int_{-\pi}^{0}\left(\frac{\Delta\tilde{\theta}_{\ell,m,n}}{\pi^2} + \frac{1}{\pi} \right)e^{\jmath\Delta\tilde{\theta}_{\ell,m,n}}d\Delta\tilde{\theta}_{\ell,m,n} \nonumber\\
& + \int_{0}^{\pi}\left(-\frac{\Delta\tilde{\theta}_{\ell,m,n}}{\pi^2} + \frac{1}{\pi} \right)e^{\jmath\Delta\tilde{\theta}_{\ell,m,n}}d\Delta\tilde{\theta}_{\ell,m,n} = \frac{4}{\pi^2}. 
\end{align}
On the other hand, since $\tilde{\theta}_{\ell,n}$ obeys a uniform distribution on $[-\pi/2,\pi/2]$, one can easily derive that 
\begin{align}
\mathbb E_{\tilde\theta_{\ell,N}}\left\lbrace e^{\jmath\tilde{\theta}_{\ell,N}}\right\rbrace = \int_{-\frac{\pi}{2}}^{\frac{\pi}{2}}\frac{1}{\pi}e^{\jmath\tilde{\theta}_{\ell,N}}d\tilde{\theta}_{\ell,N} = \frac{2}{\pi}, \label{expectation_2}\\
\mathbb E_{\tilde\theta_{\ell,N}}\left\lbrace e^{-\jmath\tilde{\theta}_{\ell,N}}\right\rbrace = \int_{-\frac{\pi}{2}}^{\frac{\pi}{2}}\frac{1}{\pi}e^{-\jmath\tilde{\theta}_{\ell,N}}d\tilde{\theta}_{\ell,N} = \frac{2}{\pi}. \label{expectation_3} 
\end{align}
By substituting \eqref{expectation_1}-\eqref{expectation_3} into \eqref{equ:matrix}, we have
\begin{align}
\mathbb E_{\tilde{\mathbf v}_\ell}\left\lbrace\tilde{\mathbf v}_\ell\tilde{\mathbf v}_{\ell}^H\right\rbrace = \begin{bmatrix}
1& \frac{4}{\pi^2} & \cdots & \frac{4}{\pi^2} & \frac{2}{\pi} \\
\frac{4}{\pi^2} & 1 & \cdots & \frac{4}{\pi^2}  & \frac{2}{\pi} \\
\vdots & \vdots & \ddots & \vdots & \vdots \\
\frac{4}{\pi^2} & \frac{4}{\pi^2} & \cdots &  1 & \frac{2}{\pi} \\
\frac{2}{\pi} & \frac{2}{\pi} & \cdots & \frac{2}{\pi} & 1 \\
\end{bmatrix} = \mathbf Z,
\end{align}
with which, the closed-form expression of $\mathcal P_\ell$ is given by $\mathcal P_\ell = {\rm diag}\left(\mathbf v_\ell\right)\mathbf Z{\rm diag}\left(\mathbf v_{\ell}^H\right)$. Finally, by replacing $\mathcal P_\ell$ in \eqref{expecta_SINR_orig} and \eqref{expecta_E_orig} with its closed-form expression, we arrive at \eqref{expecta_SINR} and \eqref{expecta_E}, respectively. This completes the proof of Theorem \ref{theo}.  

\vspace{1mm}
\section{Proof of Lemma \ref{lem}} \label{Appen_B}
We prove Lemma \ref{lem} by showing that ${\rm tr}\left(\mathbf Y_{j,\ell}\mathbf W_{\mathrm E,\ell} \right) = \mathbf v_{\ell}^H\mathbf Q_{j,\mathrm E, \ell}\mathbf v_{\ell}$, $\forall \ell\in\mathcal L'$. First, $\{q_{\ell,m}\}$ and $\{\mathbf w_{\rm E,\ell,m}\}$ can be obtained from the eigenvalue decomposition of $\mathbf W_{\rm E, \ell}$ and we can express $\mathbf W_{\rm E, \ell}$ as $\mathbf W_{\rm E,\ell} = \sum_{m=1}^{r_{\rm E,\ell}}q_{\ell,m}\mathbf w_{\rm E,\ell,m}\mathbf w_{\rm E,\ell,m}^H$. Second, recall that $\mathbf Y_{j,\ell} = \mathbf G_j^H{\rm diag}\left(\mathbf v_\ell\right)\mathbf Z{\rm diag}\left(\mathbf v_{\ell}^H\right)\mathbf G_j$, then we can derive that
\begin{align}\label{appen_B_equ}
{\rm tr}\left(\mathbf Y_{j,\ell} \mathbf W_{\rm E,\ell}\right) & \overset{(a)}{=} \sum_{m=1}^{r_{\rm E,\ell}}q_{\ell,m}\mathbf w_{\rm E,\ell,m}^H\mathbf G_j^H{\rm diag}\left(\mathbf v_\ell\right)\mathbf Z{\rm diag}\left(\mathbf v_{\ell}^H\right)\mathbf G_j\mathbf w_{\rm E,\ell,m}\nonumber\\
& \overset{(b)}{=} \sum_{m=1}^{r_{\rm E,\ell}}q_{\ell,m}{\rm conj}\left(\mathbf v_{\ell}^H{\rm diag}\left(\mathbf G_j\mathbf w_{\rm E,\ell,m}\right)\mathbf Z\left( {\rm diag}\left(\mathbf G_j\mathbf w_{\rm E,\ell,m}\right)\right)^H\mathbf v_{\ell}\right) \nonumber\\
& \overset{(c)}{=} \sum_{m=1}^{r_{\rm E,\ell}}q_{\ell,m}\mathbf v_{\ell}^H{\rm diag}\left(\mathbf G_j\mathbf w_{\rm E,\ell,m}\right)\mathbf Z\left( {\rm diag}\left(\mathbf G_j\mathbf w_{\rm E,\ell,m}\right)\right)^H\mathbf v_{\ell} \nonumber\\
& = \mathbf v_{\ell}^H\left(\sum_{m=1}^{r_{\rm E,\ell}}q_{\ell,m}{\rm diag}\left(\mathbf G_j\mathbf w_{\rm E,\ell,m}\right)\mathbf Z\left( {\rm diag}\left(\mathbf G_j\mathbf w_{\rm E,\ell,m}\right)\right)^H\right)\mathbf v_{\ell} = \mathbf v_{\ell}^H\mathbf Q_{j,\mathrm E, \ell}\mathbf v_{\ell},
\end{align}
where the equality $(a)$ utilizes the properties of the trace operator, the equality $(b)$ holds due to the facts that $\mathbf w_{\rm E,\ell,m}^H\mathbf G_j^H{\rm diag}\left(\mathbf v_\ell\right) = {\rm conj}\left(\mathbf v_{\ell}^H{\rm diag}\left(\mathbf G_j\mathbf w_{\rm E,\ell,m}\right)\right)$ and ${\rm diag}\left(\mathbf v_{\ell}^H\right)\mathbf G_j\mathbf w_{\rm E,\ell,m} = {\rm conj}\left( \left( {\rm diag}\left(\mathbf G_j\mathbf w_{\rm E,\ell,m}\right)\right)^H\mathbf v_{\ell}\right)$, and the equality $(c)$ is true since each term in the left-hand-side of $(c)$ is a real number. With \eqref{appen_B_equ}, we can readily verify that constraint \eqref{E_cons:b} is equivalent to constraint \eqref{E_cons:b_eqv}, which completes the proof of Lemma \ref{lem}. 

\vspace{1mm}
\bibliographystyle{IEEEtran}
\bibliography{ref}

\end{sloppypar}
\end{document}